\newtheorem{theorem}{Theorem}
\newtheorem{definition}{Definition}
\newtheorem{lemma}{Lemma}
\newtheorem{corollary}{Corollary}
\newtheorem{proposition}{Proposition}
\newcommand{\C}{{\mathcal{C}}}
\newcommand{\G}{{G}}
\newcommand{\V}{V}
\newcommand{\e}{E}
\newcommand{\n}{{\mathcal{N}}}
\newcommand{\RR}{{\mathbb{R}}}
\newcommand{\tp}{{T_{\alpha}}}
\newcommand{\tm}{{T_{\beta}}}
\newcommand{\dist}{{\rm dist}}
\newcommand{\prez}{{\mathbf{R_0}}}
\newcommand{\prezz}{{\mathbf{R'_0}}}
\newcommand{\prel}{{\mathbf{R}_{l}}}
\newcommand{\pretty}{{\mathbf{R}}}
\newcommand{\st}{\colon\,}
\newcommand{\sizeof}[1]{\lvert #1 \rvert}
\renewcommand{\subset}{\subseteq}
\begin{document}

\title{Paired Threshold Graphs}

\author{Vida Ravanmehr, Gregory J. Puleo,  Sadegh Bolouki, and Olgica Milenkovi\'c}
\address{Coordinated Science Lab, University of Illinois, Urbana-Champaign}


\cortext[mycorrespondingauthor]{Corresponding author: Olgica Milenkovi\'c}
\ead{milenkov@illinois.edu}


\begin{abstract}
Threshold graphs are recursive deterministic network models that have been proposed for describing certain economic and social interactions. One drawback of this graph family is that it has limited generative attachment rules. To mitigate this problem, we introduce a new class of graphs termed Paired Threshold (PT) graphs described through vertex weights that govern the existence of edges via two inequalities. One inequality imposes the constraint that the sum of weights of adjacent vertices has to exceed a specified threshold. The second inequality ensures that adjacent vertices have a weight difference upper bounded by another threshold. We provide a conceptually simple characterization and decomposition of PT graphs, analyze their forbidden induced subgraphs and  present a method for performing vertex weight assignments on PT graphs that satisfy the defining constraints. Furthermore, we describe a polynomial-time algorithm for recognizing PT graphs. We conclude our exposition with an analysis of the intersection number, diameter and clustering coefficient of PT graphs.
\end{abstract}
\begin{keyword}
{Forbidden induced subgraphs, Polynomial-time graph recognition algorithms,   Threshold graphs, Unit interval graphs.}
\end{keyword}

\maketitle

\section{Introduction}
The problem of analyzing complex behaviors of large social, economic and biological networks based on generative recursive and probabilistic models has been the subject of intense research in graph theory, machine learning and statistics. In these settings, one often assumes the existence of attachment and preference rules for network formation, or imposes constraints on subgraph structures as well as vertex and edge features that govern the creation of network communities~\cite{jackson2008social,Richards, LimitThreshold, GeogThreshold, RandomThreshold}. Models of this type have been used to predict network dynamics and topology fluctuations, infer network community properties and preferences, determine the bottlenecks and rates of spread of information and commodities and elucidate functional and structural properties of individual network modules~\cite{babis,NIPS2009_3846,ICML2012Palla_785}.  

Here, we propose a new deterministic family of graph structures that may be used for social and economic interaction modeling and easily extended to a probabilistic setting. The graphs in question, termed Paired Threshold (PT) graphs, are succinctly characterized as follows: each vertex is assigned a nonnegative weight. An edge between two vertices exists if and only if the sum of the vertex weights exceeds a certain threshold, and at the same time, the absolute value of the difference between the weights remains bounded by another prescribed threshold. PT graphs are generalizations of two classes of graphs: threshold and unit interval graphs. Threshold graphs were introduced by Chv\'atal and Hammer~\cite{ThresholdGraph} in order to solve a set-packing problem; they are defined by the first generative property of PT graphs, stating that an edge between two vertices exists if and only if the sum of their weights exceeds a predetermined threshold. Threshold graphs are used for aggregation of inequalities, synchronization and cyclic scheduling~\cite{thresholdBook}, as well as for social network modeling~\cite{LimitThreshold,masuda2005geographical}.
The concept of unit interval graphs was first introduced in~\cite{Scott}, based on a characterization of semi-orders (unit interval orders). Unit interval graphs were further investigated by Wegner in his seminal work~\cite{UnitIntervalGraphs}; there, the graphs were described in terms of vertex weights constrained that ensure that the difference of the weights of every pair of adjacent vertices lies below a predefined threshold. Other classes of graphs related to PT graphs include {\cal{quasi threshold graphs}}, introduced in~\cite{QuasiThreshold}; and {\cal{mock threshold}} graphs~\cite{MockThreshold}.


Probabilistic extensions of the deterministic model are possible as well, for example by assuming that the vertices satisfying the two weight constraints are adjacent with high probability, while vertices not satisfying the constraints are adjacent with small probability. Another approach to creating probabilistic PT graphs is to allow the vertex weights to be random variables with some prescribed distribution (e.g., uniform or Gaussian). Random PT graphs will be discussed elsewhere.

The main contributions of this work are proofs establishing a number of properties of PT graphs. First, we show that PT graphs exhibit a special hierarchical distance decomposition involving unit interval graphs and cliques. Second, we exhibit polynomial-time algorithms 
for deciding if a graph is PT or not. Third, we prove that PT graphs have small diameter, avoid ``anti-motifs'' of real social and biological networks as induced subgraphs and include graphs with good clustering coefficients.

The paper is organized as follows. In Section~\ref{Preli}, we briefly review relevant definitions and concepts from graph theory and introduce PT graphs. In Section~\ref{DTG}, we characterize the topological properties of PT graphs and some of their forbidden induced subgraphs, and describe a decomposition of the graphs. This decomposition allows one to find a vertex weight assignment that satisfies the PT graph constraints. In Section~\ref{Algorithm}, we provide a polynomial-time algorithm for identifying whether a graph is PT or not. In Section~\ref{sec:social}, using the previously devised PT graphs decomposition, we first describe a number of forbidden induced subgraphs of PT graphs and then provide  closed formulas for the intersection number and the clustering coefficient of PT graphs as well as a bound on the diameter of PT graphs.
\section{Preliminaries and background}\label{Preli}

We start by introducing relevant definitions and by providing an overview of basic properties of threshold graphs.
\begin{figure}[t]
        \begin{center}
        \subfigure[]{
         \centering\includegraphics[width=0.7in]{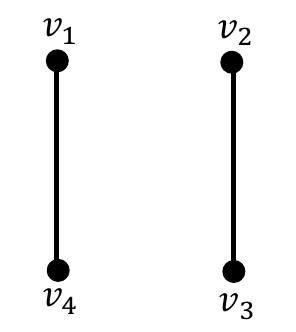}
         }
        \subfigure[]{
               \centering \includegraphics[width=0.7in]{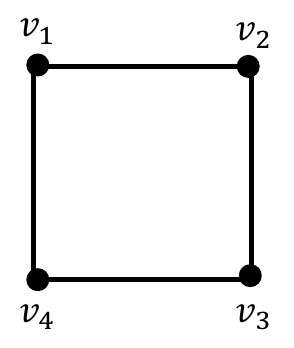}
              
   }
      \subfigure[]{
         \centering\includegraphics[width=1.2in]{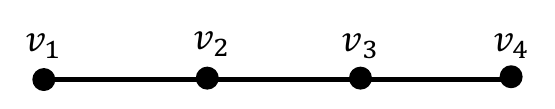}
         }
          
        \end{center}
        \caption{Forbidden induced subgraphs in threshold graphs: (a) $2K_2$ (two disjoint edges), (b) $C_4$ (a cycle of length four), (c) $P_4$ (a path of length four).}
        \vspace{-0.18in}
				\label{fig:tgforbidden}
\end{figure}
Throughout the paper, $\RR$ is used to denote the set of real numbers,
while $\RR^{+}$ is used to denote the set of positive real numbers.

	Let $\G(\V,\e)$ be an undirected graph, with vertex set $\V = \{1,\ldots,n\}$ and edge set $\e$. Two vertices $i,j \in \V$, $i \neq j$, are said to be {\it adjacent} if there exists an edge in $\e$, herein denoted by $e_{ij}$, connecting them. For every $i \in \V$, we denote by $\n(i)$ the set of the vertices adjacent to $i$, i.e.,
	\begin{equation}
		\n(i) \triangleq \{j \in \V \,|\, e_{ij} \in \e \}.
	\end{equation}
The cardinality of $\n(i)$, denoted by $d(i)$, is referred to as the degree of vertex $i$. 
	\begin{definition}
		A graph $\G(\V,\e)$ is called a {\it threshold graph} if there exists a fixed $T \in \RR^{+}$, and a weight function $w:\V \rightarrow \RR^{+},$ such that for all distinct $i,j \in \V$:
		\begin{equation}
			e_{ij} \in \e ~\Leftrightarrow~ w(i) + w(j) \geq T.
		\label{T condition}
		\end{equation}
We refer to such a threshold graph as a $(T,w)$ graph~\cite{thresholdBook}.
	\end{definition}
Threshold graphs may be equivalently defined as those graphs that avoid $C_4$, $P_4$ and $2K_2$ as induced subgraphs \cite{thresholdBook}
(see Figure~\ref{fig:tgforbidden}). Furthermore, threshold graphs may be generated using a recursive procedure, 
by sequentially adding an isolated vertex (a vertex not connected to any previously added vertices) or a dominating vertex (a vertex connected to all previously added vertices)~\cite{thresholdBook}.

	Threshold graphs may also be alternatively characterized via what is called the~\emph{vicinal preorder} $\pretty$~\cite{thresholdBook}, defined on the vertices of $\G$ as:
	\begin{equation}
	 	i \,\pretty\, j ~\Leftrightarrow~ \n(i) \backslash \{j\} \subset \n(j).
	\label{vicinal preorder}
	\end{equation}
The preorder $\pretty$ described in (\ref{vicinal preorder}) is total if it is a binary relation which is transitive and for any pair of vertices $i,j$, one has 
$i \pretty j$ or $j \pretty i$. Given a threshold graph with threshold $T$ and vertex weights $w$, it is straightforward to show that
	\begin{equation}
		i \,\pretty\, j ~\Leftrightarrow~ w(i) \leq w(j).
	\end{equation}
Therefore, since the preorder $\leq$ on the set $\RR^{+}$ is total, the preorder $\pretty$ onthe vertices of $\G$ is total as well. It turns out that the converse is also true \cite{thresholdBook}, i.e., if the preorder $\pretty$ is total, then $\G$ is a threshold graph. To see why this is true, let $\delta_1 < \ldots < \delta_m$ represent all the distinct, positive degrees of the vertices of $\G$, and set $\delta_0 = 0$. For all $i$, $0 \leq i \leq m$, define
	\begin{equation}
		D_i \triangleq \{i \in \V \,|\, d(i) = \delta_i\}.
		\label{DegreeDist}
	\end{equation}
Notice that $(D_0,\ldots,D_m)$ forms a partition\footnote{With a slight abuse of terminology, we use the term ``partition'' although $D_0$ may be empty.} of $\V$, known as the {\it degree partition} of $\V$. Define the vertex weight function $w$ according to $w(i) = j$, $\forall i \in D_j$, $0 \leq j \leq m$, and set the threshold to $T = m+1$. One can then show that the threshold $T$ and the aforedescribed weight function $w$ satisfy (\ref{T condition}), implying that $\G$ is a threshold graph \cite{thresholdBook}.

\begin{proposition}\label{good old}
	A graph $\G(\V,\e)$ is a threshold graph if and only if the preorder $\pretty$ defined in (\ref{vicinal preorder}) is total.
	\end{proposition}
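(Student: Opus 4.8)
The plan is to prove the two implications separately, leveraging the weight characterization for the forward direction and the degree partition for the converse.

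For the forward direction, I would assume $\G$ is a $(T,w)$ threshold graph and verify that $\pretty$ satisfies both requirements in the definition of a total preorder. \emph{Comparability} is the easy half: given any distinct $i,j$, the total order on $\RR^{+}$ lets me assume without loss of generality that $w(i) \leq w(j)$, and then for every $k \in \n(i) \setminus \{j\}$ the edge condition $w(i) + w(k) \geq T$ together with $w(j) \geq w(i)$ forces $w(j) + w(k) \geq T$, hence $k \in \n(j)$; this yields $\n(i)\setminus\{j\} \subseteq \n(j)$, i.e. $i \pretty j$. \emph{Transitivity} I would handle purely combinatorially, without reference to the weights: assuming $i \pretty j$ and $j \pretty k$, any $x \in \n(i)\setminus\{k\}$ with $x \neq j$ lands in $\n(j)$ and then in $\n(k)$ directly, while the single remaining case $x = j$ is dispatched by a short two-step chase (from $e_{ij} \in \e$ one gets $i \in \n(j)$, hence $i \in \n(k)$ by $j \pretty k$, hence $k \in \n(i)$, hence $k \in \n(j)$ by $i \pretty j$, that is $e_{jk} \in \e$). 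Since this argument never touches the weights, it in fact shows $\pretty$ is transitive for \emph{every} graph, so the genuine content of totality is comparability.

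For the reverse direction, I would assume $\pretty$ is total and recover a weight assignment from the degree partition $(D_0,\ldots,D_m)$ exactly as set up before the statement, taking $w(i) = j$ for $i \in D_j$ and $T = m+1$. The key preliminary step is to convert the order-theoretic hypothesis into a \emph{nesting} statement about neighborhoods: since $\pretty$ is a total preorder I can list the vertices as $v_1 \pretty v_2 \pretty \cdots \pretty v_n$, observe that $i \pretty j$ always forces $d(i) \leq d(j)$ (otherwise $\n(i)\setminus\{j\} = \n(j) \ni i$ would contradict $i \notin \n(i)$), and use comparability to make the punctured neighborhoods linearly nested along this list. With the vertices thus sorted by degree, I would verify the threshold identity $e_{ij} \in \e \Leftrightarrow w(i)+w(j) \geq m+1$ by a monotonicity argument: adjacency is determined by where each endpoint sits in the nested chain, and the degree-rank weights are calibrated so that the crossover occurs exactly at the sum $m+1$.

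The main obstacle is this last verification. Establishing that a total vicinal preorder really does produce a fully nested family of neighborhoods, and that the coarse degree-based weights reproduce \emph{all} edges and non-edges rather than merely respecting the ordering, is where the argument must be made carefully; the forward direction, by contrast, is essentially bookkeeping once the comparability computation is in hand. A minor point to address along the way is that vertices in $D_0$ receive weight $0 \notin \RR^{+}$, which is harmless since a uniform positive shift of all weights together with a matching shift of $T$ preserves the realization.
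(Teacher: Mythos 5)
Your proposal is correct and follows essentially the same route as the paper: the forward direction via the equivalence $i \pretty j \Leftrightarrow w(i) \leq w(j)$ (with the comparability computation you give), and the converse via the degree partition with $w(i)=j$ for $i \in D_j$ and $T=m+1$. The one verification you defer --- that these degree-rank weights reproduce every edge and non-edge --- is exactly the step the paper also leaves to the cited reference, so your plan matches the paper's level of detail, and your explicit combinatorial proof of transitivity and the remark about shifting the $D_0$ weights into $\RR^{+}$ are correct additions.
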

	
Unit interval graphs are defined as follows.
	\begin{definition}
		A graph $\G(\V,\e)$ is called a {\it unit interval graph} if there exist a fixed $T \in \RR^{+}$, and a weight function $w:\V \rightarrow \RR^{+}$ such that for all distinct $i,j \in \V,$
		\begin{equation}
			e_{ij} \in \e ~\Leftrightarrow~ |w(i) - w(j)| \leq T.
		\label{I condition}
		\end{equation}
	\end{definition}
\begin{definition}
		Given a connected graph $\G(\V,\e),$ a {\it distance decomposition} of $\V$ is a partition $(\C_0,\C_1,\ldots,\C_m)$, $m \geq 0$, of $\V$ in which
		\begin{equation} \label{eq:cliques}
			\C_l \triangleq \left\{ i \in \V \,{\Big{|}}\,  \dist(i,j)_{j \in \C_{0}} = l\right\},~\forall l,~ 1 \leq l \leq m,
		\end{equation}
		where $\dist(i,j)$ is the length of the shortest path between $i$ and $j$ in the graph $G$.
	\end{definition}
	Equivalently, a distance decomposition may be generated starting from a set $\C_0$, and then recursively creating $\C_l$, $1 \leq l \leq m$, according to 
		\begin{equation}
			\C_l \triangleq \left\{ i \in \V\backslash \bigcup_{l'=0}^{l-1}\C_{l'} \,{\Big{|}}\, \exists j \in \C_{l-1}:~e_{ij}\in \e \right\}.
		\label{C rec}
		\end{equation}
Simply put, $\C_1$ is the set of vertices adjacent to $\C_0$ in $\G$, excluding $\C_0$; $\C_2$ is the set of vertices adjacent to $\C_1$ in $\G$, excluding $\C_0$ and $\C_1$, and so on. Clearly, there is no edge between $\C_l$ and $\C_{l'}$, $0 \leq l,l' \leq m$, if $|l-l'| \geq 2$. 
	
	We introduce next a new family of graphs, termed {\it paired threshold} graphs, which combine the properties of threshold and unit interval graphs.	
	\begin{definition}
A graph $\G(\V,\e)$ is termed a paired threshold (PT) graph if there exist two fixed thresholds $\tp\geq \tm \in \RR^{+}$ and a weight function $w:\V \rightarrow \RR^{+}$, such that for all distinct $i,j \in \V$,
		\begin{equation}
		e_{ij} \in \e ~\Leftrightarrow~
			\begin{cases}
				w(i) + w(j) \geq \tp,\\
				and\\
				|w(i)-w(j)| \leq \tm.
			\end{cases}
		\label{DT condition}
		\end{equation}
We will refer to graphs with the above defining properties as $(\tp,\tm,w)$-PT graphs.
	\end{definition}
		Figure \ref{DT} illustrates a PT graph with $\tp=10$ and $\tm=2$, along with a possible weight assignment. Note that there exists an
		edge between the two vertices labeled by $5$ and $7$, as $5+7=12>\tp=10$ and $|5-7|=2 \leq \tm=2$, but there is no edge between the
		vertices labeled by $4$ and $7$ as $|4-7|=3>\tm=2$.
	\begin{figure}[t]
		\centering
		\includegraphics[width=2.3in]{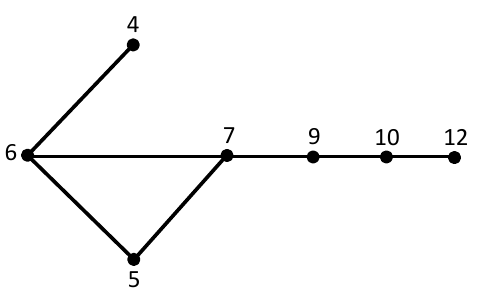}
		\caption{An example of a PT graph, along with a weight assignment for the parameters $\tp=10$ and $\tm=2$.}
	\label{DT}
	\vspace{-0.12in}
	\end{figure}
\section{Characterization of PT Graphs}\label{DTG}

We characterize next the structure of a general connected PT graph $\G(\V,\e)$ with $|\V| \geq 2$ and parameters $(\tp,\tm,w)$. The main result is stated in Theorem~\ref{maint}.

\begin{theorem}\label{maint}
          A connected graph $\G(\V,\e)$ is a PT graph if and only if
          it is a unit interval graph or if there is a distance
          decomposition $(\C_0,\C_1,\ldots,\C_m)$, for some
          $m \geq 0$, for which all the following statements hold
          true:
			\begin{enumerate}[(i)]
				\item The vicinal preorder $\prez$ defined on the elements of $\C_0$ as
				\begin{equation}
						i \,\prez\, j ~\Leftrightarrow~ \n(i) \backslash \{j\} \subseteq \n(j),
						\label{eq1}
					\end{equation}
is total.\label{con1}
				\item For every $l$, $1 \leq l \leq m$, the subgraph of $\G$ induced by $\C_l$ is a clique.\label{con2}
				\item The  preorder $\prel$ defined on the elements of $\C_l$, $1 \leq l \leq m$ according to
					\begin{equation}
						i \,\prel\, j ~\Leftrightarrow~
						\begin{cases}
							\n(j)\cap \C_{l-1}  \subseteq \n(i) \cap \C_{l-1},\\
							and\\
							\n(i)\cap \C_{l+1}  \subseteq \n(j) \cap \C_{l+1},
						\end{cases}
					\label{hoof}
					\end{equation}
is total; here, we enforce $\C_{m+1}=\emptyset$.\label{con3}
			\end{enumerate}
	\end{theorem}

We start by proving the ``only if'' part of the theorem through a series of intermediate results described in Propositions~\ref{no loss 1}-\ref{fact 4}.	

First, note that for every $e_{ij} \in \e$, from the two inequalities in (\ref{DT condition}), one must have $\min \{w(i),w(j)\} \geq \frac{\tp-\tm}{2}$. Thus, noticing that every vertex has at least one neighbor as the graph is connected, we have the following proposition. 
\begin{proposition}\label{no loss 1}
		If $\G$ is a connected graph with at least two vertices, then for every $i \in \V$, $w(i) \geq \frac{\tp-\tm}{2}$.
	\end{proposition}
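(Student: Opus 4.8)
The plan is to prove the bound first for the two endpoints of an arbitrary edge, and then to lift it to every vertex by invoking connectivity. First I would fix any edge $e_{ij} \in \e$ and, without loss of generality, assume $w(i) \leq w(j)$. The defining equivalence (\ref{DT condition}) then yields simultaneously $w(i) + w(j) \geq \tp$ and $w(j) - w(i) = |w(i) - w(j)| \leq \tm$.

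From the difference inequality I would write $w(j) \leq w(i) + \tm$ and substitute this into the sum inequality to get $\tp \leq w(i) + w(j) \leq 2w(i) + \tm$, which rearranges to $w(i) \geq \frac{\tp - \tm}{2}$. Since $w(i) = \min\{w(i), w(j)\}$, this is exactly the observation quoted just before the proposition. The key point is that, because $w(j) \geq w(i)$, the larger endpoint automatically inherits the same bound, so \emph{both} endpoints of any edge have weight at least $\frac{\tp - \tm}{2}$.

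Finally I would use the hypothesis that $\G$ is connected: every vertex $i \in \V$ has at least one neighbor $j$, hence is an endpoint of some edge $e_{ij} \in \e$, and the previous step gives $w(i) \geq \frac{\tp - \tm}{2}$. As $i$ was arbitrary, this establishes the claim for all vertices.

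The algebra here presents no genuine obstacle; the only step requiring a moment of care is transferring the bound from the \emph{smaller} endpoint (where the $\min$ estimate lives) to every vertex. This is resolved by the two observations above, namely that the larger endpoint trivially satisfies the same inequality once the smaller one does, and that connectivity guarantees each vertex participates in at least one edge.
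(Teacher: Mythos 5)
Your proof is correct and follows essentially the same route as the paper: derive $\min\{w(i),w(j)\}\geq \frac{\tp-\tm}{2}$ for the endpoints of any edge by combining the sum and difference inequalities of (\ref{DT condition}), then invoke connectivity to ensure every vertex lies on some edge. The paper states this more tersely in the sentence preceding the proposition, but the argument is identical.
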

	
	We now proceed to demonstrate that if $\G$ is not a unit interval graph, its set of vertices $\V$ has a distance decomposition $(\C_0,\ldots,\C_m)$ with a special structure. For this purpose, we define
	\begin{equation}
		\C_0 \triangleq \left\{ i\in \V\,|\,w(i) \in \left[\frac{\tp-\tm}{2} , \frac{\tp+\tm}{2}\right) \right\}.
	\label{C zero}
	\end{equation}

	\begin{proposition}\label{fact 1}
		The subgraph induced by $\V\backslash\C_0$ is a unit interval graph with parameters $(\tm,w)$. Consequently, if $\C_0$ is the empty set, then $\G$ is a unit interval graph.
	\end{proposition}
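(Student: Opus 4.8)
The plan is to show that on the vertex set $\V\backslash\C_0$ the first (summation) constraint in (\ref{DT condition}) becomes vacuous, so that adjacency is governed entirely by the interval constraint $|w(i)-w(j)|\leq\tm$, which is precisely the defining property of a unit interval graph with parameters $(\tm,w)$.

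First I would pin down the weights of vertices outside $\C_0$. By Proposition \ref{no loss 1}, every vertex satisfies $w(i)\geq\frac{\tp-\tm}{2}$, and by the definition (\ref{C zero}) of $\C_0$, a vertex lies in $\C_0$ exactly when its weight is below $\frac{\tp+\tm}{2}$. Combining these two facts, every $i\in\V\backslash\C_0$ must have $w(i)\geq\frac{\tp+\tm}{2}$.

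Next I would verify that the summation constraint holds automatically for any pair in $\V\backslash\C_0$. For distinct $i,j\in\V\backslash\C_0$, the previous step gives $w(i)+w(j)\geq\frac{\tp+\tm}{2}+\frac{\tp+\tm}{2}=\tp+\tm\geq\tp$, where the last inequality uses $\tm\in\RR^{+}$. Hence $w(i)+w(j)\geq\tp$ holds for every such pair, and the only binding condition in (\ref{DT condition}) is $|w(i)-w(j)|\leq\tm$. Reading (\ref{DT condition}) in both directions then yields $e_{ij}\in\e\Leftrightarrow|w(i)-w(j)|\leq\tm$ on $\V\backslash\C_0$: the forward direction is inherited directly from the DT condition, while the reverse direction uses exactly the fact that the summation constraint is now free. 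This is the definition (\ref{I condition}) of a unit interval graph with parameters $(\tm,w)$.

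Finally, the stated consequence is immediate: if $\C_0=\emptyset$, then $\V\backslash\C_0=\V$, so the entire graph $\G$ is a unit interval graph. The only step meriting care is the weight bound $w(i)\geq\frac{\tp+\tm}{2}$ for vertices outside $\C_0$, since it is this bound---rather than any deeper structural property---that renders the summation constraint redundant; the rest is a direct rewriting of (\ref{DT condition}). I do not expect a genuine obstacle here, but I would be careful to state both directions of the biconditional explicitly, as only the reverse implication actually invokes the weight bound.
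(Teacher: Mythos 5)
Your proof is correct and follows essentially the same route as the paper: establish $w(i)\geq\frac{\tp+\tm}{2}$ for all $i\in\V\backslash\C_0$ from Proposition~\ref{no loss 1} and the definition (\ref{C zero}), conclude that the summation constraint $w(i)+w(j)\geq\tp$ is automatic, and reduce (\ref{DT condition}) to (\ref{I condition}). The paper's proof is just a terser version of the same argument, so no further comparison is needed.
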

	\begin{proof}

As $|w(i)-w(j)| \leq \tm$ holds for every edge in the subgraph induced by $\V\backslash\C_0$, it suffices to show that $w(i)+w(j) \geq \tp$ for all  $i, j \in \V\backslash\C_0$. But this inequality follows by simply noting that  according to Proposition~\ref{no loss 1} and the definition of $\C_0$, $w(i)$ and $w(j)$ are both greater than or equal to $\frac{\tp+\tm}{2}$.
	\end{proof}
	\vspace{-0.05in}
	\begin{proposition}
	  Suppose that $\C_0$ is non-empty. Then, for any $i,j \in \C_0$, one can assume that
	\begin{equation}
		\n(i) \backslash \{j\} = \n(j) \backslash \{i\} ~\Rightarrow~ w(i) = w(j).
	\label{no loss 2}
	\end{equation}
	\label{newProp}
	\end{proposition}
	\begin{proof}
	 Assume that for some $i,j \in \C_0$, (\ref{no loss 2}) does not hold, i.e., that $\n(i) \backslash \{j\} = \n(j) \backslash \{i\}$ but $w(i) \neq w(j)$. Then, one can modify the weights assigned to $i$ and $j$ so as to satsify $w(i) = w(j)$. The modified weight assignment for $i$ and $j$ equals
		\begin{equation}
		w(i) = w(j)=
			\begin{cases}
				\max\{w(i),w(j)\},~ {\rm{if}}~ e_{ij} \in \e,\\
				\min\{w(i),w(j)\}, ~{\rm{if}} e_{ij} \not\in \e.\\
			\end{cases}
		\label{WeightModif}
		\end{equation}
	 It is straightforward to check that the constraints on the weights of vertices of PT graphs still hold under the modified weight assignment, and hence the graph topology remains unchanged. To see why the weight reassignment approach described above terminates, we first note that during the reassignment, the weight of any vertex $i$ changes monotonically. Assume on the contrary that there exists a vertex $i$ whose weight does not change monotonically. Based on (14), there must exist vertices $j$ and $k$, where $e_{ij} \in E$ and $e_{ik} \not\in E$, such that
\begin{equation}
	\n(i)\backslash \{j\} = \n(j) \backslash \{i\}
\label{last1}
\end{equation}
and
\begin{equation}
\n(i) \backslash \{k\} = \n(k) \backslash \{i\}.
\label{last2}
\end{equation}
Now, since $j \in \n(i)$, (\ref{last2}) yields $j \in \n(k)$, and consequently, $k \in \n(j)$. This, together with (\ref{last1}), results in $k \in \n(i)$ and a contradiction, since $e_{ik} \not\in E$. Given (\ref{WeightModif}) and the fact that the weights change monotonically over the course of the weight reassignment process, it follows that the weights can only take finitely many values. Thus, the weight reassignment process terminates in finite time.
	\end{proof}	
	Modifying the weights as described in~(\ref{WeightModif}) for all $i,j \in \C_0$ for which $\n(i) \backslash \{j\} = \n(j) \backslash \{i\}$ but $w(i) \neq w(j)$ results in a weight assignment $w$ for which~(\ref{no loss 2}) is satisfied for every $i,j \in \C_0$.
	Furthermore, it may be assumed without loss of generality that~(\ref{no loss 2}) holds for every $i,j \in \V$ for which $e_{ij} \in \e$. In fact, if for some $i,j\in\V, e_{ij}\in\e$,  (\ref{no loss 2}) is violated, one may change the weights assigned to $i$ and $j$ to $\max\{w(i),w(j)\}$ and repeat the reassignment procedure until (\ref{no loss 2}) is satisfied for all $i,j\in\V$ where $e_{ij}\in\e$.
		
	Having defined $\C_0$ in (\ref{C zero}), let $(\C_0,\ldots,\C_m)$ be the distance decomposition of $\V$ starting with $\C_0$ as previously defined. Then, the following result holds.	
	\begin{proposition}\label{fact 2}
		The vicinal preorder $\prez$ defined on the elements of $\C_0$ as
		\begin{equation}
			i \,\prez\, j ~\Leftrightarrow~ \n(i) \backslash \{j\} \subseteq \n(j),
		\label{pre zero}
		\end{equation}
is total.
	\end{proposition}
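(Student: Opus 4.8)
The plan is to reduce the totality of $\prez$ to the totality of the usual order $\leq$ on the vertex weights, by establishing that for $i,j\in\C_0$ one has $i\prez j$ if and only if $w(i)\le w(j)$. Since $\leq$ is a total preorder on $\RR^{+}$, this equivalence transports comparability and transitivity to $\prez$, which is exactly the assertion of the proposition. Thus the whole statement collapses to a single ``monotonicity'' lemma relating weights to neighborhood nesting.

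The heart of the argument is the forward implication: if $i,j\in\C_0$ with $w(i)\le w(j)$, then $\n(i)\backslash\{j\}\subset\n(j)$. I would fix an arbitrary $k\in\n(i)\backslash\{j\}$ and verify the two defining inequalities of (\ref{DT condition}) for the pair $j,k$. The sum inequality is immediate, since $w(j)+w(k)\ge w(i)+w(k)\ge\tp$ by $w(j)\ge w(i)$ and $e_{ik}\in\e$.

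The difference inequality $|w(j)-w(k)|\le\tm$ is the genuine obstacle, and it is precisely here that the definition (\ref{C zero}) of $\C_0$ as a half-open interval of length exactly $\tm$ becomes indispensable. I would split on the sign of $w(k)-w(j)$. If $w(k)\ge w(j)$, then $w(k)\ge w(j)\ge w(i)$ gives $w(k)-w(j)\le w(k)-w(i)=|w(k)-w(i)|\le\tm$, again using $e_{ik}\in\e$. If instead $w(k)<w(j)$, I would use the interval bounds directly: membership $j\in\C_0$ yields $w(j)<\frac{\tp+\tm}{2}$, while $e_{ik}\in\e$ together with $i\in\C_0$ (so $w(i)<\frac{\tp+\tm}{2}$) forces $w(k)\ge\tp-w(i)>\frac{\tp-\tm}{2}$, whence $w(j)-w(k)<\tm$. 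In either case $k\in\n(j)$, proving the forward implication. The delicate case is the second one, which would break for an interval longer than $\tm$; the point is that every neighbor of a $\C_0$-vertex has weight strictly above $\frac{\tp-\tm}{2}$, while every $\C_0$-vertex has weight strictly below $\frac{\tp+\tm}{2}$, so the two can differ by less than $\tm$.

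For the converse, I would suppose $i\prez j$ while $w(i)>w(j)$ and derive a contradiction. Applying the forward implication to the pair $(j,i)$, which is legitimate since $w(j)<w(i)$, yields $j\prez i$ as well; as the elements of $\n(i)\backslash\{j\}$ and $\n(j)\backslash\{i\}$ avoid both $i$ and $j$, the two inclusions force $\n(i)\backslash\{j\}=\n(j)\backslash\{i\}$. The normalization assumption (\ref{no loss 2}) then gives $w(i)=w(j)$, contradicting $w(i)>w(j)$. This completes the equivalence $i\prez j\Leftrightarrow w(i)\le w(j)$ on $\C_0$, and hence the totality of $\prez$.
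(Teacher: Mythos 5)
Your proposal is correct and follows the same master plan as the paper: both reduce the totality of $\prez$ to the equivalence $i \,\prez\, j \Leftrightarrow w(i) \le w(j)$ on $\C_0$, and both prove the forward half by fixing $k \in \n(i)\backslash\{j\}$ and verifying the two inequalities of (\ref{DT condition}) for the pair $(j,k)$. In that forward half you split on the sign of $w(k)-w(j)$ where the paper splits on whether $k \in \C_0$; the two case analyses are interchangeable and rest on exactly the same facts, namely that the weights of $\C_0$ occupy a half-open interval of length $\tm$ and that any neighbor of a $\C_0$-vertex has weight exceeding $\frac{\tp-\tm}{2}$. The one genuine divergence is the converse: the paper argues directly, extracting a witness $k$ with $e_{jk}\in\e$, $e_{ik}\notin\e$ and running a second two-case analysis on (\ref{DT condition}) to conclude $w(i)<w(j)$, whereas you bootstrap from the forward implication --- if $w(i)>w(j)$ then also $j \,\prez\, i$, the two inclusions collapse to $\n(i)\backslash\{j\}=\n(j)\backslash\{i\}$ (legitimately, since neither set can contain $i$ or $j$), and the normalization (\ref{no loss 2}) forces $w(i)=w(j)$, a contradiction. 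Your converse is shorter and reuses work already done; the paper's pays for its extra case analysis with the slightly stronger conclusion that strict inclusion of neighborhoods yields strict inequality of weights, which is not needed for the proposition. Both arguments are sound.
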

	
	\begin{proof}
		For the preorder $\prez$ to be total, it suffices to show that for every distinct $i,j \in \C_0$, one has to have
			\begin{equation}
				i \,\prez\, j ~\Leftrightarrow~ w(i) \leq w(j).
			\label{equi total}
			\end{equation}
(We recall that the preorder $\leq$ is total on $\RR^{+}$). From (\ref{pre zero}) and (\ref{equi total}), 
it therefore suffices to prove that for every distinct pair $i,j \in \C_0,$ one has
			\begin{equation}
				w(i) \leq w(j) ~\Leftrightarrow~ \n(i) \backslash \{j\} \subseteq \n(j).
			\end{equation}
		
		\noindent ($\Rightarrow$): Assume that $w(i) \leq w(j)$. We prove for every $k \in \V\backslash \{j\}$ the following fact: if $e_{ik} \in \e$, then $e_{jk} \in \e$. We consider two different cases.
		\begin{itemize}
			\item[1.] If $k \in \C_0$, from the definition of $\C_0$ in (\ref{C zero}), $|w(j)-w(k)| \leq \tm$. Moreover, since $e_{ik} \in \e$, $w(i) + w(k) \geq \tp$. Thus, $w(j) + w(k) \geq \tp$ and hence $e_{jk} \in \e$.\vspace{.1in}
			\item[2.] If $k \in \V \backslash \C_0$, $w(i) \leq w(j) < w(k)$. Since $e_{ik} \in \e$, we have
				\begin{equation}
					w(j) + w(k) \geq w(i) + w(k) \geq \tp,
				\end{equation}
and
				\begin{equation}
					|w(j) - w(k)| \leq |w(i) - w(k)| \leq \tm,
				\end{equation}
which together imply that $e_{jk} \in \e$.
		\end{itemize}

		\noindent ($\Leftarrow$): Assume $\n(i) \backslash \{j\} \subseteq \n(j)$. We prove that $w(i) \leq w(j)$.  If $\n(i) \backslash \{j\} = \n(j) \backslash \{i\}$, from (\ref{no loss 2}), we have $w(i) = w(j)$. Thus, assume that $\n(i) \backslash \{j\}$ is properly contained in $\n(j)\backslash\{i\}$. Then, there exists $k \in \V \backslash \{i,j\}$ such that $e_{ik} \not\in \e$ and $e_{jk} \in \e$. We show that $w(i) < w(j)$ by considering the following two cases.
		\begin{itemize}
			\item[1.] If $k \in \C_0$, from the definition of $\C_0$ in (\ref{C zero}), both $|w(i) - w(k)| \leq \tm$ and $|w(j) - w(k)| \leq \tm$ are satisfied. 
			Thus, since $e_{ik} \not\in \e$ and $e_{jk} \in \e$, according to~(\ref{DT condition}), we must have $w(i) + w(k) < \tp$ and $w(j) + w(k) \geq \tp$, which immediately results in $w(i) < w(j)$.\vspace{.1in}	
			\item[2.] If $k \in \V \backslash \C_0$, from Proposition \ref{no loss 1}  and the definition of $\C_0$ in (\ref{C zero}), $w(k) \geq \frac{\tp+\tm}{2}$. On the other hand, since $i,j \in \C_0$, both $w(i)$ and $w(j)$ are greater than or equal to $\frac{\tp-\tm}{2}$. Thus,  $w(i) + w(k) \geq \tp$ and $w(j) + w(k) \geq \tp$. Therefore, since $e_{ik} \not\in \e$ and $e_{jk} \in \e$, according to (\ref{DT condition}), we must have $|w(i) - w(k)| > \tm$ and $|w(j) - w(k)| \leq \tm$. Recall that $w(k) \geq \frac{\tp+\tm}{2}$, which implies that $w(k)>\max\{w(i),w(j)\}$. Hence, $w(k)-w(i) > \tm$ and $w(k) - w(j) \leq \tm$, which together imply $w(i) < w(j)$.
		\end{itemize}
	\end{proof}
		Next, we give a characterization of the subgraphs induced by $\C_l$ and define a preorder on the vertices in $\C_l$ for all $1\leq l \leq m$ in Propositions \ref{fact 3} and \ref{fact 4}. The proofs of both Propositions follow directly from properties of unit interval graphs and the fact that the subgraph induced by $\V\backslash\C_0$ is a unit interval graph with parameters $(\tm,w)$.
\begin{proposition}\label{fact 3}
		For every $l$, $1 \leq l \leq m$, the subgraph of $\G$ induced by $\C_l$ is a clique.
	\end{proposition}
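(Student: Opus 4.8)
The goal is to show that for each $l$ with $1 \le l \le m$, any two distinct vertices of $\C_l$ are adjacent. Since $\C_l \subseteq \V\backslash\C_0$, every vertex of $\C_l$ has weight at least $\frac{\tp+\tm}{2}$ by Proposition~\ref{no loss 1} and the definition of $\C_0$, so for any $i,j\in\C_l$ the threshold inequality $w(i)+w(j)\ge\tp$ holds automatically (their sum is at least $\tp+\tm\ge\tp$). Moreover, by Proposition~\ref{fact 1} the subgraph induced by $\V\backslash\C_0$ is a unit interval graph with parameter $\tm$, so two vertices of $\C_l$ are adjacent if and only if their weights differ by at most $\tm$. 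The whole problem therefore reduces to showing that the weights occurring within a single layer $\C_l$ span an interval of length at most $\tm$.

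The plan is to prove that each layer sits in a narrow weight band. Writing $M_{l-1} \triangleq \max_{u\in\C_{l-1}} w(u)$, with $M_0$ the maximum weight in $\C_0$, the claim is that every $v \in \C_l$ satisfies $M_{l-1} < w(v) \le M_{l-1}+\tm$. The upper bound is immediate: by the distance decomposition $v$ has a neighbor $u\in\C_{l-1}$, and adjacency forces $|w(v)-w(u)|\le\tm$, hence $w(v)\le w(u)+\tm\le M_{l-1}+\tm$. Granting the lower bound as well, any two $i,j\in\C_l$ then lie in an interval of length strictly less than $\tm$, so $|w(i)-w(j)|<\tm$, which together with the automatic threshold inequality yields $e_{ij}\in\e$ and proves $\C_l$ is a clique.

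The main obstacle is the lower bound $w(v) > M_{l-1}$, which I would establish by induction on $l$. For $l=1$ it is trivial, since $w(v)\ge\frac{\tp+\tm}{2} > M_0$. For the inductive step, suppose toward a contradiction that some $v\in\C_l$ has $w(v)\le M_{l-1}$. Let $u^{\star}\in\C_{l-1}$ attain the maximum weight $M_{l-1}$ and let $u'\in\C_{l-2}$ be a neighbor of $u^{\star}$. The inductive hypothesis controls $\min_{u\in\C_{l-1}}w(u)$, which pins $w(u')$ into the window $[M_{l-1}-\tm,\,M_{l-1})$, while a neighbor of $v$ in $\C_{l-1}$ gives a matching lower bound on $w(v)$; checking the two cases $w(v)\ge w(u')$ and $w(v)<w(u')$ then yields $|w(v)-w(u')|\le\tm$, and the threshold inequality again holds automatically. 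Hence $v$ is adjacent to $u'\in\C_{l-2}$, forcing $\dist(v,\C_0)\le l-1$ and contradicting $v\in\C_l$. The delicate point, and the reason the argument must use the maximum-weight vertex of $\C_{l-1}$ rather than an arbitrary predecessor along a shortest path to $v$, is that connecting $v$ downward requires the widest possible reach from layer $l-1$; with an arbitrary predecessor the two relevant weight differences could add up to as much as $2\tm$, and the bound would fail.
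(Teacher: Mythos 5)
Your proof is correct, and although it follows the same high-level strategy as the paper --- first noting that the sum condition $w(i)+w(j)\ge\tp$ is automatic for vertices outside $\C_0$, and then confining the weights of each layer $\C_l$ to a band of width $\tm$ by induction on $l$ --- the band you use and, more importantly, the way you justify its lower edge are genuinely different and more careful than the paper's. The paper asserts the absolute containment $w(i)\in\bigl[\tfrac{\tp+(2l-1)\tm}{2},\tfrac{\tp+(2l+1)\tm}{2}\bigr)$ for $i\in\C_l$, claiming it follows by induction from $w(i)\le\max_{j\in\C_{l-1}}w(j)+\tm$; the upper bound does, but the lower bound of that containment is false in general for $l\ge 2$. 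For instance, with $\tp=10$, $\tm=2$ and a path on three vertices of weights $4$, $6$, $7$, the weight-$7$ vertex lies in $\C_2$ yet has weight below the claimed $\tfrac{\tp+3\tm}{2}=8$. Your relative band $M_{l-1}<w(v)\le M_{l-1}+\tm$, where $M_r$ denotes the maximum weight in $\C_r$, is the correct statement (it holds in that example), and your contradiction argument for the lower edge --- showing that a vertex $v\in\C_l$ with $w(v)\le M_{l-1}$ would be forced adjacent to a neighbor $u'\in\C_{l-2}$ of the maximum-weight vertex of $\C_{l-1}$, violating the distance decomposition --- supplies exactly the justification the paper omits. Your two-case check that $|w(v)-w(u')|\le\tm$ goes through, using $w(u')\in[M_{l-1}-\tm,\,M_{l-2}]$ and $w(v)\in(M_{l-2}-\tm,\,M_{l-1}]$, and the sum condition $w(v)+w(u')\ge\tp$ holds even when $u'\in\C_0$ by Proposition~\ref{no loss 1}. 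So your write-up not only proves the proposition but repairs a genuine gap in the paper's own argument, and your closing remark about why the maximum-weight predecessor (rather than an arbitrary one) must be used correctly identifies where a naive induction would let the weight spread grow to $2\tm$.
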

	\begin{proof}
		First, recall that $\C_0$ contains all vertices whose weight is less than $(\tp+\tm)/2$. Let $l$, $1 \leq l \leq m$, be arbitrary. From the recursive relation (\ref{C rec}) and from conditions in (\ref{DT condition}), it is easy to see that for every $i \in \C_l$, one must have
			\begin{equation}
				\max_{k \in \C_{l-1}} w(k) < w(i) \leq \max_{k \in \C_{l-1}} w(k) + \tm.
			\label{C rec 2}
			\end{equation}
This immediately implies that $|w(i) - w(j)| < \tm$ for every $i,j \in \C_l$. Furthermore, recalling once again that $\C_0$ contains all vertices of weight less than $(\tp+\tm)/2$, we have
		\begin{equation}
			w(i) \geq \frac{\tp+\tm}{2},~\forall i \in \C_l.
		\end{equation}
Thus, for every $i,j \in \C_l$, it holds that
		\begin{equation}
			w(i) + w(j) \geq \frac{\tp+\tm}{2} + \frac{\tp+\tm}{2} \geq \tp.
    		\end{equation}
Therefore, both conditions of (\ref{DT condition}) are satisfied for every $i,j \in \C_l$, which results in $e_{ij} \in \e$, $\forall i,j \in \C_l$. Hence, the subgraph induced by $\C_l$, $1 \leq l \leq m$, is a clique.
	\end{proof}
	\begin{proposition}\label{fact 4}
		The  preorder $\prel$, defined on the elements of $\C_l$, $1 \leq l \leq m$, according to
		\begin{equation}
			i \,\prel\, j ~\Leftrightarrow~
			\begin{cases}
				\n(j)\cap \C_{l-1}  \subseteq \n(i) \cap \C_{l-1},\\
				and\\
				\n(i)\cap \C_{l+1}  \subseteq \n(j) \cap \C_{l+1},
			\end{cases}
		\end{equation}
is total.
	\end{proposition}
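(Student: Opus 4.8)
The plan is to establish the two requirements of totality for $\prel$—transitivity, and comparability of every pair of vertices—following the template of Proposition \ref{fact 2}, by reducing the relation to the ordering of the vertex weights. The engine of the argument is the weight localization already proved inside Proposition \ref{fact 3}: for every $0 \leq l \leq m$ and every $i \in \C_l$ one has $w(i) \in [\frac{\tp+(2l-1)\tm}{2},\frac{\tp+(2l+1)\tm}{2})$, the case $l=0$ being exactly the definition of $\C_0$. The decisive consequence I would extract is that an edge joining a vertex of $\C_l$ to a vertex of an adjacent class $\C_{l-1}$ or $\C_{l+1}$ has its \emph{sum} constraint satisfied automatically: for $i \in \C_l$ and $k \in \C_{l-1}$ one gets $w(i)+w(k) \geq \frac{\tp+(2l-1)\tm}{2}+\frac{\tp+(2l-3)\tm}{2} \geq \tp$, and likewise $w(i)+w(k) \geq \tp$ when $k \in \C_{l+1}$. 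Hence, across consecutive distance classes, the presence of an edge is governed solely by the difference constraint $|w(i)-w(k)| \leq \tm$.

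Transitivity I would dispatch first, as it needs none of the weight machinery and is inherited directly from the transitivity of set inclusion. If $i \prel j$ and $j \prel k$, then chaining $\n(k)\cap\C_{l-1} \subset \n(j)\cap\C_{l-1} \subset \n(i)\cap\C_{l-1}$ yields the first defining inclusion for $i \prel k$, and chaining $\n(i)\cap\C_{l+1} \subset \n(j)\cap\C_{l+1} \subset \n(k)\cap\C_{l+1}$ yields the second; therefore $i \prel k$.

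For comparability I would take distinct $i,j \in \C_l$ and assume without loss of generality that $w(i) \leq w(j)$, the claim being that then $i \prel j$. Since the weight intervals are half-open and separated, every $k \in \C_{l-1}$ satisfies $w(k) < w(i) \leq w(j)$ and every $k \in \C_{l+1}$ satisfies $w(k) > w(j) \geq w(i)$, so all the relevant absolute values collapse to signed differences. If $k \in \n(j)\cap\C_{l-1}$, then $w(j)-w(k) \leq \tm$ forces $w(i)-w(k) \leq w(j)-w(k) \leq \tm$, and since the sum constraint is free this gives $k \in \n(i)\cap\C_{l-1}$, whence $\n(j)\cap\C_{l-1} \subset \n(i)\cap\C_{l-1}$. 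Symmetrically, if $k \in \n(i)\cap\C_{l+1}$, then $w(k)-w(i) \leq \tm$ forces $w(k)-w(j) \leq \tm$, giving $\n(i)\cap\C_{l+1} \subset \n(j)\cap\C_{l+1}$. These are precisely the two conditions defining $i \prel j$. As $\leq$ is total on $\RR^{+}$, every pair is comparable, and together with transitivity this shows $\prel$ is total.

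The argument is short once the localization is in hand, so the main thing to get right is the reduction of edge-existence across consecutive classes to the single difference constraint; after that the inclusions follow mechanically. The only remaining points demanding care are the boundary classes: for $l=m$ there is no $\C_{m+1}$, so the second inclusion in the definition of $\prel$ is vacuous (both neighborhood intersections are empty) and only the $\C_{l-1}$ part carries content; for $l=1$ the lower class is $\C_0$, and I would check that the sum estimate $w(i)+w(k) \geq \tp$ still holds there, which it does since $w(k) \geq \frac{\tp-\tm}{2}$ by Proposition \ref{no loss 1} and the definition of $\C_0$. I would also note that, unlike in Proposition \ref{fact 2}, the converse implication $i \prel j \Rightarrow w(i) \leq w(j)$ is never required for totality, so assumption (\ref{no loss 2}) plays no role in this proof.
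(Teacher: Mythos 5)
Your proof is correct and its engine is the same as the paper's: both arguments rest on the weight localization $w(i) \in \left[\frac{\tp+(2l-1)\tm}{2},\frac{\tp+(2l+1)\tm}{2}\right)$ for $i \in \C_l$ (established inside the proof of Proposition~\ref{fact 3}), which makes the sum constraint automatic between consecutive classes and reduces the forward implication $w(i) \leq w(j) \Rightarrow i \prel j$ to exactly the two monotone difference estimates you give. Where you diverge is in how totality is then deduced: the paper proves the full equivalence $w(i)\leq w(j) \Leftrightarrow i \prel j$, and its ($\Leftarrow$) direction requires the normalization assumption~(\ref{no loss 2}) together with the fact that $\C_l$ is a clique with neighbors confined to $\C_{l-1}\cup\C_l\cup\C_{l+1}$; you instead observe that comparability needs only the ($\Rightarrow$) direction (since $\leq$ is total on the weights) and that transitivity is inherited directly from transitivity of set inclusion, so the converse and assumption~(\ref{no loss 2}) can be dropped entirely. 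That is a legitimate and slightly leaner route to the stated conclusion; what the paper's fuller equivalence buys in exchange is the biconditional~(\ref{dar10}), $w(i)\leq w(j) \Leftrightarrow i \prel j$, which is reused later in the sufficiency proof of Theorem~\ref{maint}, so the converse direction is not wasted work in the paper even though it is not needed for totality itself.
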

\begin{proof}
		Since the preorder $\leq$ on $\RR^{+}$ is total, it suffices to show that:
		\begin{equation}
			w(i) \leq w(j) ~\Leftrightarrow~
			\begin{cases}
				\n(j)\cap \C_{l-1}  \subseteq \n(i) \cap \C_{l-1},\\
				and\\
				\n(i)\cap \C_{l+1}  \subseteq \n(j) \cap \C_{l+1}.
			\end{cases}
		\label{akharashim}
		\end{equation}
		($\Rightarrow$): Assume that $w(i) \leq w(j)$. We first show that $\n(j)\cap \C_{l-1}  \subseteq \n(i) \cap \C_{l-1}$. Let $k \in \n(j)\cap \C_{l-1}$ be arbitrary. Since $k \in \C_{l-1}$, we must have
		$$w(k) < w(i) \leq w(j).$$ 
		On the other hand, since $k \in \n(j)$, we also have $w(j) - w(k) \leq \tm$. Thus, $w(i) - w(k) \leq \tm$. Moreover, since 
		$$w(i) \geq \frac{\tp + \tm}{2}~  {\rm{ and }} ~w(k) \geq \frac{\tp - \tm}{2},$$
		we have $w(i) + w(k) \geq \tp$. Hence, according to (\ref{DT condition}), $e_{ik} \in \e$. 
		
		We show next that $\n(i)\cap \C_{l+1}  \subseteq \n(j) \cap \C_{l+1}$. For an arbitrary $k \in \n(i)\cap \C_{l+1}$, similar to the previous argument, we 
		have $w(i) \leq w(j) < w(k)$ and $w(k) - w(i) \leq \tm$. Thus, $w(k) - w(j) \leq \tm$. Moreover, $w(k) + w(j) \geq \tp$, and according to (\ref{DT condition}),  $e_{jk} \in \e$.
		
		($\Leftarrow$): Assume that both inclusion relations of (\ref{akharashim}) hold. Moreover, assume to the contrary of the claimed assumption that $w(j) < w(i)$. From part ($\Rightarrow$) of the proof, we conclude
		\begin{equation}
			\begin{cases}
				\n(i)\cap \C_{l-1}  \subseteq \n(j) \cap \C_{l-1},\\
				and\\
				\n(j)\cap \C_{l+1}  \subseteq \n(i) \cap \C_{l+1}.
			\end{cases}
		\label{akharashim2}
		\end{equation}	
From (\ref{akharashim2}) and the two inclusion relations of (\ref{akharashim}), we obtain
		\begin{equation}
			\n(i) \cap (\C_{l-1} \cup \C_{l+1}) = \n(j) \cap (\C_{l-1} \cup \C_{l+1}).
		\label{akharashim3}
		\end{equation}
Next, recall that since $i,j \in \C_l$, their neighbors can only be in $\C_{l-1}$, $\C_l$, and $\C_{l+1}$, where the subgraph induced by $\C_l$ is a clique. Thus, from (\ref{akharashim3}), we conclude that $\n(i) \backslash \{j\} = \n(j) \backslash \{i\}$. According to~(\ref{no loss 2}), we must have $w(i) = w(j)$, and the claim follows by contradiction.
\end{proof}

\begin{corollary}
 Let $\G(\V,\e)$ be a PT graph with parameters $(\tp,\tm,w)$ and  let $v \in \V$. The subgraph induced by $S = \{z \in \n(v): w(z) \geq w(v)\}$ is a clique in $\G$.
\label{RightNeighborsClique}
\end{corollary}
We omit the proof of the corollary, as it is a straightforward consequence of the properties of unit interval graphs and since it can be proved similarly to Proposition~\ref{fact 3}.

	\begin{figure}[t]
		\centering
		\includegraphics[width=5in]{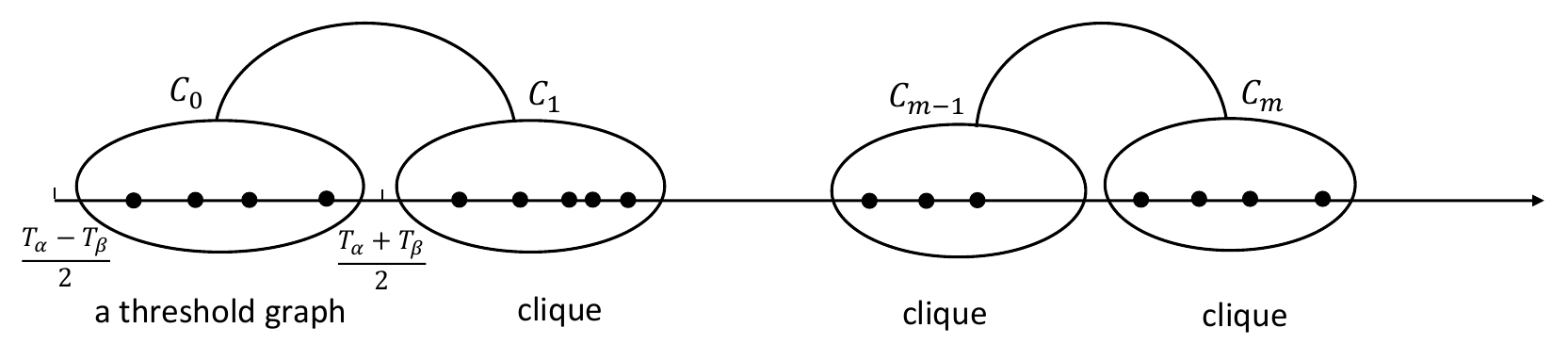}
		\caption{Decompositional structure of a PT graph.}
	\label{partition}
	\end{figure}
	
A distance decomposition of a PT graph is shown in Figure~\ref{partition}. 

In what follows, we prove the ``if'' part of Theorem~\ref{maint} by showing that the PT graph properties established in Propositions \ref{fact 1}-\ref{fact 4} are also sufficient for a graph to be a connected PT graph.
	
	Let $\tp\geq\tm > 0$ be arbitrary. If $\G$ is a unit interval graph, there is a weight function $w:\V \rightarrow \RR^{+}$ such that $\G$ is a unit interval graph with parameters $(\tm,w)$. By defining $w' = w + \frac{\tp}{2}$, it is straightforward to conclude that $\G$ is a $(\tp,\tm,w')$-PT graph. 
	Assume that a distance decomposition $(\C_0,\C_1,\ldots,\C_m)$, where $m \geq 0$, exists and satisfies (\ref{con1})-(\ref{con3}). We construct a weight function $w: \V \rightarrow \RR^{+}$ that establishes that $\G(\V,\e)$ is a $(\tp,\tm,w)$-PT graph. We first assign weights to the vertices in $\C_0$ and then proceed to make similar assignments for the sets $\C_l$, $1 \leq l \leq m$.

		{\it Step 1:} For the weight assignments of $\C_0$, we first show that the subgraph of $\G$ induced by $\C_0$ is a threshold graph. Defining a preorder $\prezz$ on the elements of $\C_0$ according to
		\begin{equation}
			i \,\prezz\, j ~\Leftrightarrow~ \left(\n(i)\cap\C_0\right) \backslash \{j\} \subseteq \n(j) \cap \C_0,
		\end{equation}
we have
		\begin{equation}
			i \,\prez\, j ~\Rightarrow~ i \,\prezz\, j.
		\end{equation}
Thus, since $\prez$ is total according to (\ref{con1}), $\prezz$ is also a total order. Therefore, according to Proposition~\ref{good old}, the subgraph of $\G$ induced by $\C_0$ is a threshold graph.

For the second part of the proof, we need the following lemma.
  \begin{lemma} \label{lem 0}
   For $\tp \geq \tm \in \RR^{+}$, and for all $i \in \C_0$, there exist weight assignments $w(i)$ with the following properties.
		\begin{enumerate}
			\item The subgraph of $\G$ induced by $\C_0$ is a threshold graph with parameters $(\tp,w)$.
			\item For all $i \neq j$, $i,j \in \C_0$, $w(i) \neq w(j)$.
			\item For all $i \in \C_0$,$ w(i) \in \left(\frac{\tp-\tm}{2},\frac{\tp+\tm}{2}\right)$.
		\end{enumerate}
	\end{lemma}
	\begin{proof}
	Recall the notion of the degree partition of the vertices of a graph from the argument leading to Proposition~\ref{good old}. Let $(D_0,\ldots,D_{m'})$ be the degree partition of $\C_0$ in the subgraph of $\G$ induced by $\C_0$. We start with defining the weight function $w:\C_0\rightarrow\RR^+$ as $w(i) = j$ for every $i \in D_j$, $0\leq j \leq m'$.  The subgraph of $\G$ induced by $\C_0$ is a threshold graph with parameters $(m'+1,w)$. We now modify, via the following steps, the weight function $w$ in such a way that it meets the criteria 1-3 of Lemma~\ref{lem 0}.\\
	\it Step 1: For every $i \in \C_0$, we modify $w(i)$ to $w(i)+\epsilon_i$, where $0<\epsilon_i<1/2$, in such a way that the modified weights of every two distinct vertices in $\C_0$ are different. The subgraph of $\G$ induced by $\C_0$ remains a threshold graph with parameters $(m'+1,w)$, a fact which may be verified by observing that $m'+1$ is an integer; the starting weights of the assignment were all integer-valued; and the modified weights are obtained from the previous weights by adding to them a value smaller than $1/2$.\\
	\it Step 2: We next divide  all the weights obtained in the previous step by $m'+1$, to obtain a threshold graph with parameters $(1,w)$, where $w(i) \neq w(j)$ for every distinct $i,j \in \C_0$, and where all the weights are in $(0,1)$.\\
	\it Step 3: Finally, we multiply the weights by $\tm$ and then add $\frac{\tp-\tm}{2}$ to them. It is straightforward to see that the subgraph of $\G$ induced by $\C_0$ becomes a threshold graph with parameters $(\tp,w),$ where $w$ satisfies all the three criteria of Lemma~\ref{lem 0}. 
	\end{proof}
	In conclusion, the weight assignments of $\C_0$ meet all three criteria of Lemma \ref{lem 0}. We also point out that $\forall i,j \in \C_0$,
		\begin{equation}
			w(i) \leq w(j) ~\Rightarrow~ i \prez j.
			\label{hh}
		\end{equation}

		{\it Step 2:} Let a constant $\epsilon > 0$ be such that it satisfies the following two inequalities.
		\begin{equation}
			\epsilon < \min_{i,j \in \C_0}  \Big\{|w(i)-w(j)| \,{\Big{|}}\, w(i) \neq w(j) \Big\},
		\label{dar1}
		\end{equation}
		\begin{equation}
			\epsilon < \frac{n}{n+1}\, \min_{i \in \C_0} \left\{w(i) - \frac{\tp-\tm}{2}\right\}.
		\label{hoof4}
		\end{equation}
Note that since $w$ satisfies Criteria 2 and 3 of Lemma \ref{lem 0}, an $\epsilon>0$ such as described above exists. Then, for every $l$, $1 \leq l \leq m$, we define the vertex weights for $\C_l$ recursively as follows: $\forall i \in \C_l$, set
		\begin{equation}
			\begin{array}{ll}
				w(i) \triangleq
				& \hspace{-.1in}\tm+ \Big( \min \big\{ w(k)\,|\, k\in \n(i) \cap \C_{l-1} \big\} \Big)\vspace{.05in}\\
				& \hspace{-.1in} -\frac{\epsilon}{(n+1)^{l-1}}\left( 1 - \frac{|\n(i) \cap \C_{l+1}|}{n+1} \right),
			\end{array}
		\label{hoof6}
		\end{equation}
and recall that $\C_{m+1}$ is the empty set. Observing that~(\ref{hh}) holds for every $i,j \in \C_0$, by induction on $l$, it is clear from~(\ref{hoof}) that for every $i,j \in \C_l$, $1 \leq l \leq m$,
		\begin{equation}
			w(i) \leq w(j) ~\Leftrightarrow~ i \prel j.
		\label{dar10}
		\end{equation}

	Having defined the vertex weights, we are now ready to prove that $\G(\V,\e)$ is an $(\tp,\tm,w)$-PT graph, i.e., that the Condition~(\ref{DT condition}) is satisfied for every distinct pair of vertices $i,j \in \V$. We consider the following cases.
	
		{\it Case 1:} Let $i,j \in \C_0$. We know that the subgraph of $\G$ induced by $\C_0$ is a threshold graph with parameters $(\tp,w)$. Therefore,
			\begin{equation}
				e_{ij} \in \e \Leftrightarrow w(i) + w(j) \geq \tp.
			\label{hoof2}
			\end{equation}
By noticing from the third criterion of Lemma \ref{lem 0} that both $w(i)$ and $w(j)$ lie in the interval $\left(\frac{\tp-\tm}{2},\frac{\tp+\tm}{2}\right)$, we have $|w(i) - w(j)| \leq \tm$. This fact, together with (\ref{hoof2}), implies (\ref{DT condition}).

		{\it Case 2:} Let $i \in \V\backslash \C_0$. We first state and prove the following lemmas.
		\begin{lemma}\label{lem1}
			For every $\C_l$, $0\leq l \leq m$, and every $k' \in \C_l$, we have
				\begin{equation}
					\frac{\tp+(2l-1)\tm}{2}+ \frac{\epsilon}{n(n+1)^{l-1}}  <w(k') < \frac{\tp+(2l+1)\tm}{2}.
				\label{hoof3}
				\end{equation}
		\end{lemma}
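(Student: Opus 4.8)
The plan is to prove the two-sided bound by induction on $l$, exploiting the recursive definition (\ref{hoof6}) of the weights together with the two constraints (\ref{dar1})--(\ref{hoof4}) imposed on $\epsilon$. At each level $l\geq 1$ the inductive hypothesis supplies the analogous bound for every vertex of $\C_{l-1}$; since the distance decomposition guarantees that each $i\in\C_l$ has a neighbor in $\C_{l-1}$, the quantity $m_i \triangleq \min\{w(k)\mid k\in\n(i)\cap\C_{l-1}\}$ appearing in (\ref{hoof6}) is well defined and inherits these bounds. The only other ingredient is a uniform estimate on the correction term $\frac{\epsilon}{(n+1)^{l-1}}\bigl(1-\frac{\sizeof{\n(i)\cap\C_{l+1}}}{n+1}\bigr)$.

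For the base case $l=0$, the upper bound $w(k')<\frac{\tp+\tm}{2}$ is exactly Criterion 3 of Lemma~\ref{lem 0}. The lower bound reads $w(k')>\frac{\tp-\tm}{2}+\frac{\epsilon(n+1)}{n}$, which is precisely the content of constraint (\ref{hoof4}): rearranging $\epsilon<\frac{n}{n+1}\bigl(w(i)-\frac{\tp-\tm}{2}\bigr)$ gives $w(i)>\frac{\tp-\tm}{2}+\frac{\epsilon(n+1)}{n}$ for every $i\in\C_0$. Note that at $l=0$ the factor $\frac{\epsilon}{n(n+1)^{l-1}}$ equals $\frac{\epsilon(n+1)}{n}$, which is exactly why the $\frac{n}{n+1}$ weighting was built into (\ref{hoof4}).

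For the inductive step I would first record that since $0\le\sizeof{\n(i)\cap\C_{l+1}}\le n$, the factor $1-\frac{\sizeof{\n(i)\cap\C_{l+1}}}{n+1}$ lies in $[\frac{1}{n+1},1]$, so the correction term is squeezed between $\frac{\epsilon}{(n+1)^{l}}$ and $\frac{\epsilon}{(n+1)^{l-1}}$. The upper bound is then immediate: the inductive hypothesis gives $m_i<\frac{\tp+(2l-1)\tm}{2}$, and subtracting the (strictly positive) correction from $\tm+m_i$ yields $w(i)<\tm+\frac{\tp+(2l-1)\tm}{2}=\frac{\tp+(2l+1)\tm}{2}$, as required.

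The delicate part, and the main obstacle, is the lower bound, because there one must subtract the \emph{largest} possible correction $\frac{\epsilon}{(n+1)^{l-1}}$ and still retain a strictly positive surplus over $\frac{\tp+(2l-1)\tm}{2}$. Here I would invoke the inductive lower bound $m_i>\frac{\tp+(2l-3)\tm}{2}+\frac{\epsilon}{n(n+1)^{l-2}}$, so that
\begin{equation}
w(i)>\tm+\frac{\tp+(2l-3)\tm}{2}+\frac{\epsilon}{n(n+1)^{l-2}}-\frac{\epsilon}{(n+1)^{l-1}}.
\end{equation}
The constant terms collapse to $\frac{\tp+(2l-1)\tm}{2}$, and the decisive point is the telescoping identity $\frac{\epsilon}{n(n+1)^{l-2}}-\frac{\epsilon}{(n+1)^{l-1}}=\frac{\epsilon}{n(n+1)^{l-1}}$, which reproduces exactly the surplus demanded at level $l$. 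This is where the precise scaling of both the stated bound and the correction term in (\ref{hoof6}) was engineered to mesh; verifying this single cancellation is the crux, and once it is in place the induction closes.
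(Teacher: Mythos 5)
Your proof is correct and follows essentially the same route as the paper's: induction on $l$, with the base case read off from (\ref{hoof4}) and Criterion 3 of Lemma~\ref{lem 0}, the upper bound obtained by dropping the (positive) correction term, and the lower bound obtained by bounding $\sizeof{\n(k')\cap\C_{l+1}}$ from below by $0$ and verifying the cancellation $\frac{\epsilon}{n(n+1)^{l-2}}-\frac{\epsilon}{(n+1)^{l-1}}=\frac{\epsilon}{n(n+1)^{l-1}}$. No gaps.
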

		\begin{proof}
			We prove the inequalities in (\ref{hoof3}) by induction on $l$. For $l=0$, the first inequality of (\ref{hoof3}) is an immediate result of (\ref{hoof4}), while the second inequality follows from the third criterion of Lemma \ref{lem 0}. We now assume that (\ref{hoof3}) holds for $l-1$,  $1 \leq l \leq m$, and prove that it also holds for $l$. To prove the first inequality of (\ref{hoof3}), we observe that
			\begin{equation}
				\begin{array}{l}
					\min \big\{ w(k)\,|\, k\in \n(k') \cap \C_{l-1} \big\}\vspace{.05in}\\
					\geq \min_{k \in \C_{l-1}} w(k)\vspace{.05in}\\
					> \frac{\tp+(2l-3)\tm}{2}+ \frac{\epsilon}{n(n+1)^{l-2}},
				\end{array}
			\label{hoof5}
			\end{equation}
where in the second inequality of (\ref{hoof5}), we used the induction hypothesis. Furthermore,
			\begin{equation}
				|\n(k') \cap \C_{l+1}| \geq 0.
				\label{hoofhoof}
			\end{equation}
Using inequalities (\ref{hoof5}) and (\ref{hoofhoof}) in the recursive relation (\ref{hoof6}) results in the first inequality of (\ref{hoof3}). For the second inequality, by noticing that $|\n(i)\cup \C_{l+1}| \leq n$, one may use (\ref{hoof6}) to obtain 
			\begin{equation}
				\begin{array}{ll}
					w(k')
					&\hspace{-.1in}\leq \tm+ \min_{k\in \C_{l-1}}w(k)\vspace{.05in}\\
					&\hspace{-.1in}< \tm + \frac{\tp+(2l-1)\tm}{2} = \frac{\tp+(2l+1)\tm}{2}.
				\end{array}
			\label{hoof7}
			\end{equation}
In the second inequality, we used the induction hypothesis for $l-1$. 
		\end{proof}
		
	\begin{lemma}\label{lem2}
			For every $\C_l$, $0 \leq l \leq m$, we have
			\begin{equation}
				\frac{\epsilon}{(n+1)^{l}} \leq \min_{k',k'' \in \C_l} \Big\{|w(k')-w(k'')| \,{\Big{|}}\, w(k') \neq w(k'')\Big\}.
			\label{dar2}
			\end{equation}
		\end{lemma}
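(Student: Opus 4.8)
The plan is to establish (\ref{dar2}) by induction on $l$, reading its right-hand side as the smallest nonzero gap between weights assigned to vertices of $\C_l$: the number $\epsilon/(n+1)^{l}$ is exactly the resolution at which the recursion (\ref{hoof6}) keeps the weights of the $l$-th shell apart. The base case $l=0$ is nothing but (\ref{dar1}), since two distinct weights in $\C_0$ differ by more than $\epsilon=\epsilon/(n+1)^{0}$.

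For the inductive step, fix $l$ with $1\leq l\leq m$ and pick $i,j\in\C_l$ with $w(i)\neq w(j)$; say $w(i)>w(j)$. By the correspondence (\ref{dar10}) this means $j\,\prel\,i$ holds strictly, so the definition (\ref{hoof}) of $\prel$ gives the two containments $\n(i)\cap\C_{l-1}\subseteq\n(j)\cap\C_{l-1}$ and $\n(j)\cap\C_{l+1}\subseteq\n(i)\cap\C_{l+1}$. Writing $M_i=\min\{w(k)\st k\in\n(i)\cap\C_{l-1}\}$ and $M_j$ analogously (both well defined because a vertex of $\C_l$ always has a neighbor in $\C_{l-1}$), subtracting two instances of (\ref{hoof6}) yields
\begin{equation}
w(i)-w(j)=\big(M_i-M_j\big)+\frac{\epsilon}{(n+1)^{l}}\Big(\sizeof{\n(i)\cap\C_{l+1}}-\sizeof{\n(j)\cap\C_{l+1}}\Big).
\end{equation}
The first containment forces $M_i\geq M_j$ (a minimum taken over a smaller set is larger), and the second forces $\sizeof{\n(i)\cap\C_{l+1}}\geq\sizeof{\n(j)\cap\C_{l+1}}$, so both summands are nonnegative.

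I would then split according to whether the $\C_{l-1}$ term vanishes. If $M_i>M_j$, then $M_i$ and $M_j$ are two distinct weights of $\C_{l-1}$, so the induction hypothesis gives $M_i-M_j\geq\epsilon/(n+1)^{l-1}\geq\epsilon/(n+1)^{l}$, and the nonnegative second summand only increases the gap. If instead $M_i=M_j$, the entire difference is the second summand; since $w(i)>w(j)$ the integer $\sizeof{\n(i)\cap\C_{l+1}}-\sizeof{\n(j)\cap\C_{l+1}}$ is strictly positive, hence at least $1$, so $w(i)-w(j)\geq\epsilon/(n+1)^{l}$. Either way the required bound holds.

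The delicate point, and the reason for the nested denominators in (\ref{hoof6}), is that the two summands live at different scales and must not interfere. The $\C_{l+1}$ perturbation is quantized in units of $\epsilon/(n+1)^{l}$ and has magnitude at most $n\epsilon/(n+1)^{l}<\epsilon/(n+1)^{l-1}$, since at most $n$ vertices can lie in $\C_{l+1}$; thus it is too small to wipe out a genuine gap coming from $\C_{l-1}$, yet a single unit of it already realizes the advertised separation. The only real work is to check that these two effects never cancel below $\epsilon/(n+1)^{l}$, and the monotonicity supplied by the total preorder (\ref{hoof})—which pins both summands to the same sign—is precisely what makes that verification immediate.
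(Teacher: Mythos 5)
Your proof is correct and follows essentially the same route as the paper's: induction on $l$ with base case (\ref{dar1}), and an inductive step that splits the difference of two instances of (\ref{hoof6}) into a $\C_{l-1}$ contribution (handled by the induction hypothesis) and a $\C_{l+1}$ contribution quantized in units of $\epsilon/(n+1)^l$, both pinned nonnegative by the totality of $\prel$. The only cosmetic difference is that you case on whether the two minima over $\C_{l-1}$ coincide, while the paper cases on which neighborhood condition distinguishes the two vertices; your version incidentally sidesteps the paper's tersely justified strict inequality (\ref{dar7}), but the argument is otherwise the same.
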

	\begin{proof}
			The proof follows by induction on $l$. For $l=0$, (\ref{dar2}) reduces to (\ref{dar1}). We now assume that (\ref{dar2}) holds for some $l-1$, $1 \leq l\leq m$ and prove it for $l$. 
			
			First, note that according to~(\ref{hoof6}):
	\begin{equation}
			\begin{array}{ll}\nonumber
			  & w(k')-w(k'')\\
				& = \min \big\{ w(k)\,|\, k\in \n(k') \cap \C_{l-1} \big\} - \min \big\{ w(k)\,|\, k\in \n(k'') \cap \C_{l-1} \big\}\vspace{.05in}\\
				& \hspace{-.1in} + \frac{\epsilon}{(n+1)^{l}}\left( |\n(k') \cap \C_{l+1}|- |\n(k'') \cap \C_{l+1}| \right).
			\end{array}
			\end{equation}
			Thus, in order to have $|w(k')-w(k'')| > 0$,  at least one of the following relations must hold:
			\begin{equation}
				\n(k') \cap \C_{l-1} \neq \n(k'') \cap \C_{l-1},
			\label{dar3}
			\end{equation}
			\begin{equation}
				|\n(k') \cap \C_{l+1}| \neq |\n(k'') \cap \C_{l+1}|
			\label{dar4}
			\end{equation}
Recalling (\ref{con3}), $\prel$ as defined in (\ref{hoof}) is total on $\C_l$. Without loss of generality, assume that $k' \prel k''$, which results in
			\begin{equation}
				\n(k'') \cap \C_{l-1} \subseteq \n(k') \cap \C_{l-1},\\
			   \label{dar5}
			\end{equation}
			\begin{equation}
			  and \nonumber
			\end{equation}
			\begin{equation}
				|\n(k') \cap \C_{l+1}| \leq |\n(k'') \cap \C_{l+1}|.
			\label{dar6}
			\end{equation}

				{\it Case 1:} If (\ref{dar3}) holds, from (\ref{dar5}) one has
					\begin{equation}
						\n(k'') \cap \C_{l-1} \subset \n(k') \cap \C_{l-1},
					\end{equation}
which implies that
					\begin{equation}
						\min \big\{ w(k)\,{\Big{|}}\, k\in \n(k') \cap \C_{l-1} \big\}
						< \min \big\{ w(k)\,{\Big{|}}\, k\in \n(k'') \cap \C_{l-1} \big\}.
					\label{dar7}
					\end{equation}
Notice that the difference between the two expressions on the opposite side of inequality~(\ref{dar7}) is at least $\epsilon/(n+1)^{l-1}$ by the induction hypothesis. Using this observation and~(\ref{dar6}) in the recursive relation~(\ref{hoof6}) results in
					\begin{equation}
						w(k'')-w(k') \geq \epsilon/(n+1)^{l-1} > \epsilon/(n+1)^{l}.
					\end{equation}
					
				{\it Case 2:} If (\ref{dar4}) holds, from (\ref{dar6}), we have
					\begin{equation}
						|\n(k') \cap \C_{l+1}| < |\n(k'') \cap \C_{l+1}|,
					\label{dar8}
					\end{equation}
where the difference between the two expressions on the opposite side of inequality (\ref{dar8}) is at least $1$. We also know from~(\ref{dar5}) that
					\begin{equation}
						\begin{array}{l}
							\min \big\{ w(k)\,|\, k\in \n(k') \cap \C_{l-1} \big\}\vspace{.05in}\\
							\leq \min \big\{ w(k)\,|\, k\in \n(k'') \cap \C_{l-1} \big\}. 
						\end{array}
					\label{dar9}
					\end{equation}
Using (\ref{dar8}) and (\ref{dar9}) in  (\ref{hoof6}), we have
					\begin{equation}
						w(k'') - w(k') \geq \frac{\epsilon}{(n+1)^{l-1}}\left(\frac{1}{n+1}\right) = \frac{\epsilon}{(n+1)^l}, \notag
					\end{equation}
which completes the proof.
	\end{proof}		
	Recall that we wish to show that for every $i \in \V \backslash \C_0$, and $j \in \V$:
	\begin{equation*}
		e_{ij} \in \e ~\Leftrightarrow~
			\begin{cases}
				w(i) + w(j) \geq \tp,\\
				and\\
				|w(i)-w(j)| \leq \tm.
			\end{cases}
		\end{equation*}
	
	Without loss of generality, assume next that $i \in \C_l$, $1 \leq l \leq m$, and $j \in \C_{l'}$, where $0 \leq l' \leq l$. We analyze the cases $l' \leq l -2$, $l' = l-1$, and $l' =l$ as follows.
	\begin{itemize}
		\item[1.] If $l' \leq l - 2$, we know from the defining property of the distance decomposition $(\C_0,\C_1,\ldots,\C_m)$ that $e_{ij} \not\in \e$. On the other hand, according to Lemma \ref{lem1}, $w(i)-w(j) > \tm$. Thus, the Condition (\ref{DT condition}) holds.
		\item[2.] If $l' = l-1$, we consider two possibilities: $e_{ij} \in \e$ and $e_{ij} \not\in \e$. If $e_{ij} \in \e$, from (\ref{hoof6}) we have 
			\begin{equation}
				w(i) \leq \tm + \Big( \min \big\{ w(k)\,|\, k\in \n(i) \cap \C_{l-1} \big\} \Big) \leq \tm + w(j). \notag
			\end{equation}
On the other hand, according to Lemma \ref{lem1}, we conclude that $w(i) + w(j) \geq \tp$. Thus, (\ref{DT condition}) holds. 

If $e_{ij}\not\in \e$, then $j \neq j',$ where
$$ j'\triangleq {\rm argmin} \big\{ w(k)\,|\, k\in \n(i)\}. \notag$$
If  $w(j') > w(j)$, then from Lemma \ref{lem2}, 
$$w(j') - w(j) > \frac{\epsilon}{(n+1)^{l-1}}.$$
 Thus, from the recursive relation (\ref{hoof6}), it is straightforward to show that $w(i) > w(j) +\tm$. As a result, Condition (\ref{DT condition}) is satisfied. The inequality $w(j') \leq w(j)$ is impossible, since otherwise from (\ref{dar10}) and $e_{ij'}\in \e$, one would have $e_{ij} \in \e$.
		\item[3.] If $l'=l$, then $e_{ij} \in \e$ according to (\ref{con2}). From Lemma \ref{lem1}, we deduce that both $w(i) + w(j) \geq \tp$ and $|w(i) - w(j)| \leq \tm$ are satisfied. Hence, Condition~(\ref{DT condition}) holds.
	\end{itemize}
	This completes the proof of Theorem~\ref{maint}.

\section{A Polynomial-time Algorithm for Identifying PT Graphs}
\label{Algorithm}
Having characterized PT graphs and assigned weights to a PT graph
given the thresholds $\tp$ and $\tm$, we are now ready to describe a polynomial-time
algorithm for checking if a given graph $\G(\V,\e)$ is PT or not. The algorithm 
produces a distance decomposition satisfying Conditions (i)-(iii) of Theorem~\ref{maint} for a PT graph which is
not a unit interval graph. If $\G$ is not a PT graph, the algorithm
finds a forbidden induced subgraph in $\G$ or shows that there does
not exist a distance decomposition satisfying Conditions (i)-(iii) of
Theorem~\ref{maint} in $\G$.  

We start by providing necessary definitions and concepts needed to analyze the
algorithm and then proceed to outline the polynomial-time algorithm
itself.

We begin by recalling the definition of chordal graphs, along with a basic characterization due to Fulkerson and Gross~\cite{fulkerson-gross} as well as Rose~\cite{rose}.
\begin{definition}
  A graph is \emph{chordal} if it has no induced cycle of length greater than $3$.
\end{definition}
\begin{definition}
  A \emph{simplicial vertex} in a graph $H$ is a vertex $v$ such that
  $\n(v)$ is a clique.
\end{definition}
\begin{lemma}[Fulkerson--Gross~\cite{fulkerson-gross}, Rose~\cite{rose}]\label{lem:simplicial}
  A graph $G$ is chordal if and only if every induced subgraph of $G$ has a simplicial vertex.
\end{lemma}
Lemma~\ref{lem:simplicial} implies that every PT graph is chordal.
\begin{lemma}\label{lem:chordal}
  If $G$ is a PT graph, then $G$ is chordal.
\end{lemma}
\begin{proof}
  Since every induced subgraph of a PT graph is a PT graph, it
  suffices, by Lemma~\ref{lem:simplicial}, to show that every PT graph
  has a simplicial vertex. Suppose $G$ is an $(\tp,\tm,w)$-PT graph.  Let $i$ be
  a vertex minimizing $w$. We claim that $i$ is a simplicial
  vertex. Let $j,k$ be any distinct vertices in $\n(i)$; we may assume
  that $w(k) \geq w(j)$.
  
  Since $w(i)$ is minimum among all vertices and $e_{ij}, e_{ik} \in E$,
  we have $w(j), w(k) \in [w(i), w(i) + \tm]$, so that $\sizeof{w(i) - w(j)} \leq \tm$.
  Since $w(j) + w(k) \geq w(j) + w(i) \geq \tp$, we have $e_{jk} \in E$, implying
  that $\n(i)$ is a clique.
\end{proof}
Next we recall the following forbidden subgraph characterization of
unit interval graphs, and introduce the related notion of \emph{semi-unit-interval graphs}.
\begin{lemma}[Roberts~\cite{roberts-unit}]
  A graph is unit interval if and only if it is chordal and contains no induced subgraphs isomorphic to the $K_{1,3}$, sun and net graphs shown in Figure~\ref{fig:k13netsun} ($K_{1,3}$ in Figure~\ref{fig:k13netsun}(a), sun graph in Figure~  \ref{fig:k13netsun}(b) and net graph in Figure~\ref{fig:k13netsun}(c)).
\end{lemma}
\begin{figure}[t]
        \begin{center}
				\subfigure[]{
         \centering\includegraphics[width=1in]{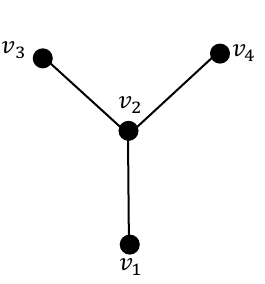}
         }	
        \subfigure[]{
         \centering\includegraphics[width=1in]{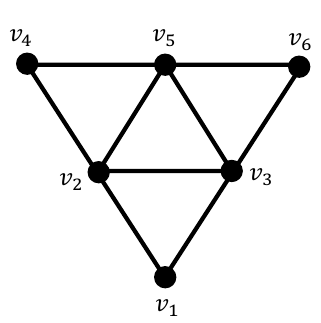}
         }	
				\subfigure[]{
         \centering\includegraphics[width=0.8in]{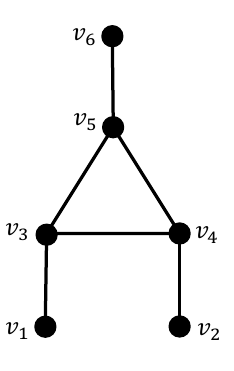}
         }
     
        \end{center}
        \caption{ Three forbidden induced subgraphs in unit interval graphs, (a) $K_{1,3}$, (b) a sun, (c) a net.}
				\label{fig:k13netsun}
				\vspace{-0.15in}
\end{figure}

\begin{definition}
\label{def1_Gr}
A graph $\G(\V,\e)$ is semi-unit-interval if it is chordal and has no
induced subgraph isomorphic to a net or a sun.
\end{definition}
\begin{lemma}
\label{lem2_Gr}
   If $\G(\V,\e)$ is a PT graph, then $G$ is semi-unit-interval.
\end{lemma}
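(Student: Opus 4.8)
The plan is to exploit the fact that the class of DT graphs is hereditary: restricting a valid weight function $w$ together with the parameters $(\tp,\tm)$ to any vertex subset shows that every induced subgraph of a DT graph is again a DT graph. Since the net, the sun, and every chordless cycle on at least four vertices are connected, and since (by Definition~\ref{def1_Gr}) a graph is semi unit interval precisely when it is chordal and contains no induced net or sun, it suffices to prove that none of these three types of graphs is itself a DT graph.

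The key tool I would set up first is an ordering lemma. Relabel the vertices so that $w(1)\le w(2)\le\cdots\le w(n)$, and claim that for all $i<j<k$, if $e_{ik}\in\e$ then $e_{jk}\in\e$. Indeed, $e_{ik}\in\e$ gives $w(i)+w(k)\ge\tp$ and $w(k)-w(i)\le\tm$; since $w(i)\le w(j)\le w(k)$ we get $w(j)+w(k)\ge w(i)+w(k)\ge\tp$ and $0\le w(k)-w(j)\le w(k)-w(i)\le\tm$, so $e_{jk}\in\e$. In words, in the weight order the lower-weight neighbors of every vertex form a contiguous block ending at its immediate predecessor. (Reversing this order yields exactly an interval-graph vertex ordering, so the lemma already shows that every DT graph is an interval graph; I will use the block property directly.)

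Chordality then follows quickly. Suppose toward a contradiction that a DT graph contained an induced cycle $C$ with $\sizeof{C}\ge 4$; by heredity $C$ is itself a DT graph, so I may apply the block property to the vertices of $C$ ordered by weight. Let $z$ be the heaviest vertex of $C$. Its two cycle-neighbors both lie below it, so by the block property they are the two next-heaviest vertices of $C$, say $y_1$ (the heaviest below $z$) and $y_2$; thus $z$ is adjacent to both $y_1$ and $y_2$. Now $y_1$ has a second cycle-neighbor $c\neq z$, which necessarily lies at or below $y_1$ in weight; applying the block property to $y_1$ forces it to be adjacent to its predecessor $y_2$. Hence $z,y_1,y_2$ form a triangle, contradicting that $C$ is an induced cycle of length at least four. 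Therefore every DT graph is chordal.

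It remains to exclude the net and the sun, and this is the step I expect to be the main obstacle, since both graphs are themselves chordal and so cannot be ruled out by chordality alone. The cleanest route is to use the ordering lemma in its interval-graph form: every DT graph is an interval graph, and interval graphs contain no asteroidal triple, so it is enough to observe that the three degree-one vertices of the net (respectively the three degree-two outer vertices of the sun) form an asteroidal triple; neither graph is then an interval graph, hence neither is DT. If one prefers to remain entirely within the weight framework, the same conclusion can be obtained by a short but somewhat tedious finite case analysis: enumerate the possible positions of the maximum-weight vertex of the net or sun (a triangle vertex or a pendant/outer vertex) and check that the block property is violated in every arrangement, the obstruction always stemming from a degree-one or degree-two vertex whose unique lower-weight neighbor fails to be its immediate predecessor. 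Either way the net and the sun are not DT graphs, which together with chordality shows that every DT graph is semi unit interval.
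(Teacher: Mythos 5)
Your proof is correct, but it takes a genuinely different route from the paper's. The paper's argument is a (very terse) appeal to Theorem~\ref{maint}: one checks that $C_4$, the net and the sun admit no distance decomposition satisfying conditions (i)--(iii), with the longer cycles deferred to the discussion around Corollary~\ref{chordal}. You instead work directly from the weight definition: heredity of the DT property under taking induced subgraphs, plus the one-sided ordering lemma that in the weight order $w(1)\le\cdots\le w(n)$, $i<j<k$ and $e_{ik}\in\e$ force $e_{jk}\in\e$. Your chordality argument from that block property is complete and correct (the two cycle-neighbours of the heaviest vertex of an induced cycle must be its two immediate predecessors, and the block property applied to the heavier of these forces the missing chord). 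For the net and the sun you invoke the fact that the reversed order is an interval-graph ordering in the sense of Olariu's characterization, so every DT graph is an interval graph and hence asteroidal-triple-free, while the three pendant (resp.\ outer) vertices of the net (resp.\ sun) form an asteroidal triple. What your approach buys is independence from Theorem~\ref{maint} together with a strictly stronger intermediate conclusion --- DT graphs are interval graphs, and in fact AT-free --- which the paper never states; what it costs is reliance on two classical external facts (Olariu's vertex-ordering characterization of interval graphs and the easy direction of Lekkerkerker--Boland), or else the finite case analysis you sketch but do not carry out. Either of these is acceptable, but whichever you use should be made explicit: as written, the net/sun step is the only place where the argument is not fully self-contained.
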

\begin{proof}
  By Lemma~\ref{lem:chordal}, PT graphs are chordal. It remains to show
  that a PT graph has no induced subgraphs isomorphic to a sun or a net.
  Since every induced subgraph of a PT graph is a PT graph, it
  suffices to show that the sun and the net are not PT graphs.

  First, let $G$ be a graph isomorphic to the net; we show that $G$ is
  not a PT graph. Suppose to the contrary that $G$ is a PT graph, and
  let $(\C_0, \C_1, \ldots, \C_m)$ be a distance decomposition with
  the properties guaranteed by Theorem~\ref{maint}. Since $G$ is not a
  unit interval graph, $\C_0$ is nonempty.  Observe that all vertices
  of degree $1$ are pairwise incomparable in the vicinal preorder;
  likewise, all vertices of degree $3$ are pairwise incomparable in
  the vicinal preorder. Thus, $\C_0$ contains at most one vertex of
  degree $1$ and at most one vertex of degree $3$. Using the symmetry
  of $G$, it is straightforward (if slightly tedious) to check all
  possible such choices of $\C_0$, and to observe that for each
  possible choice, one of the sets $\C_l$ for $l > 0$ is not a
  clique, contradicting our choice of the distance decomposition
  to satisfy the properties guaranteed by Theorem~\ref{maint}.

  Next, let $G$ be a graph isomorphic to the sun; we show that $G$ is
  not a PT graph. Again, suppose to the contrary that $G$ is a PT
  graph, and let $(\C_0, \C_1, \ldots, \C_m)$ be a distance
  decomposition with the properties guaranteed by
  Theorem~\ref{maint}. As before, since $G$ is not a unit
  interval graph, we see that $\C_0$ is nonempty, and as before,
  all vertices of degree $2$ are pairwise incomparable in the
  vicinal preorder, as are all vertices of degree $4$. Consequently, 
  $\C_0$ contains at most one vertex of degree $2$ and at most
  one vertex of degree $4$. It is again straightforward but tedious
  to check that each possible choice of $\C_0$ satisfying these constraints
  leads to one of the sets $\C_l$ for $l > 0$ failing to be a clique,
  yielding a contradiction.
\end{proof}
The results of Section~\ref{DTG} imply that any PT graph $G$
admits a partition $(V_T, V_U)$ of its vertices such that
the subgraph induced by $V_T$ is a threshold graph and
the subgraph induced by $V_U$ is a unit interval graph.
Seeking a converse, we look for conditions on a vertex partition
$(V_T, V_U)$ which guarantee that the graph being partitioned
is a PT graph. The relevant notion turns out to be an \emph{admissible}
partition.
\begin{definition}
\label{def3_Gr}
 Let $\G(\V,\e)$ be a semi-unit-interval graph, and let $(V_T, V_U)$ be a partition of $\V$.  We say that $(V_T, V_U)$ is \emph{admissible} if all the following conditions hold:\\
\begin{enumerate}[(1)]
  \item No two vertices  of $V_T$   are incomparable in the vicinal preorder,
  \item For every $i \in V_T$, the set $\n(i) \cap V_U$  is a clique, and
  \item There are no induced subgraphs that have any of the induced \emph{colorings} shown in Figure~\ref{fig:bull-k13}.
\begin{figure}[t]
        \begin{center}
        \subfigure[]{
         \centering\includegraphics[width=1in]{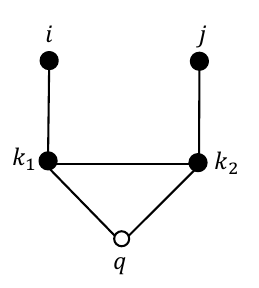}
         }	
				\subfigure[]{
         \centering\includegraphics[width=0.9in]{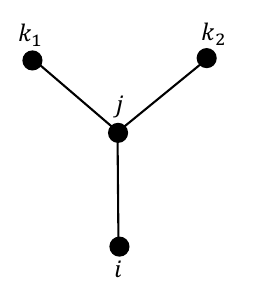}
         }
     \subfigure[]{
         \centering\includegraphics[width=0.9in]{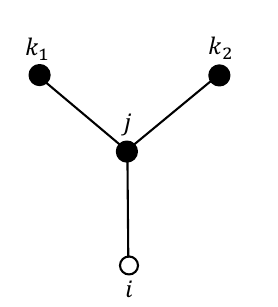}
         }
        \end{center}
        \caption{Forbidden induced colorings of a bull (a) and $K_{1,3}$ ((b), (c)) in an admissible partition,  where $\circ$ denotes a vertex in $V_T$ and $\bullet$ denotes a vertex in $V_U$. We refer to $\circ$ and $\bullet$ as colors.}
				\label{fig:bull-k13}
				\vspace{-0.15in}
\end{figure}
\end{enumerate}
\end{definition}
For any vertex set $X$, let $G[X]$ denote the subgraph of $G$ induced
by the vertex set $X$.  Observe that if $(V_T, V_U)$ is an admissible
partition of $G$, then Condition~1 immediately implies that $G[V_T]$
is a threshold graph. Similarly, Condition 3 implies that $G[V_U]$ is
a unit interval graph, since it implies that $G[V_U]$ has no induced
subgraph isomorphic to $K_{1,3}$, and the other forbidden induced
subgraphs for unit interval graphs are already forbidden in $G$ due
to $G$ being a semi-unit-interval graph.

One can think of Condition~3, in particular, as a version of a
``forbidden induced subgraphs'' condition: while we are not able to
characterize PT graphs by their forbidden induced subgraphs, the
following theorem characterizes them as being the graphs that
admit a $2$-coloring which omits a set of induced
\emph{colorings}. In fact, the first two conditions can also be
reformulated, with some effort, as forbidding certain colorings of a
set of induced subgraphs, but we have chosen to state them in a more direct way.
\begin{theorem}
  Let $\G(\V,\e)$ be a semi-unit-interval graph.  The graph $G$ is
  paired threshold if and only if it has an admissible partition.
  \label{Th:p-admissible}
\end{theorem}
We apply Theorem~\ref{Th:p-admissible} to devise an algorithm for determining
whether a graph is paired threshold, shown in Algorithm~\ref{alg:DT}. The algorithm 
requires one more definition, a specialized version
of admissible partitions.
\begin{definition}
  Let $G$ be a graph, and let $v \in V$. A vertex partition $(V_T, V_U)$
  is \emph{$v$-admissible} if:
  \begin{itemize}
  \item $(V_T, V_U)$ is admissible,
  \item $v \in V_T$, and
  \item Among the vertices of $V_T$, the vertex $v$ is maximal in the
    vicinal preorder.
  \end{itemize}
\end{definition}
\begin{algorithm}
  \caption{Determine whether a graph $G$ is paired threshold,
    and if so, return an admissible partition $(V_T, V_U)$.}
  \label{alg:DT}
  \begin{algorithmic}
    \IF{$G$ is not semi-unit-interval}
    \STATE{Return ``False''.}
    \ENDIF
    \IF{$G$ is unit interval}
    \STATE{Return the partition $(\emptyset, V)$.}
    \ENDIF
    \FOR{$v \in V$}
    \IF{there is a $v$-admissible partition $(V_T, V_U)$}
    \STATE{Return the partition $(V_T, V_U)$}
    \ENDIF
    \ENDFOR
    \STATE{Return ``False''.}
  \end{algorithmic}
\end{algorithm}
It is known that chordality testing for a graph with $n$ vertices and
$m$ edges can be carried out in $O(n+m) = O(n^2)$
time~\cite{rose-tarjan-lueker, tarjan-yannakakis, TY-addendum}, and as
there are only two other forbidden induced subgraphs for a graph to be
semi-unit-interval, each of which has $6$ vertices, we can test
whether a graph is semi-unit-interval in $O(n^6)$ time, simply by
first checking whether $G$ is chordal, and, if so, testing each
possible set of $6$ vertices for the remaining forbidden induced
subgraphs. Furthermore, one can determine in linear time whether
  a given graph is a unit-interval graph~\cite{linear-unit-interval}.

In order to determine whether a $v$-admissible partition exists (for a specified $v$), 
we will produce a 2SAT instance whose satisfying
solutions correspond to $v$-admissible partitions of $G$. It is known
that a 2SAT instance with $t$ clauses can be solved in $O(t)$
time~\cite{EIS-2sat}, and our construction will produce a
polynomially-sized 2SAT instance in polynomial time, so this yields a
polynomial-time algorithm for checking whether a $v$-admissible
partition exists. A more detailed complexity analysis will be given
at the end of the section.

\begin{lemma}  
\label{lem5}
  If $(V_T, V_U )$ is a $v$-admissible partition, then for every  $i \in V_T$, we have
  $\n(i) \subseteq \n (v) \cup \{v\}$.
\end{lemma}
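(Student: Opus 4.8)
The plan is to fix $i \in V_T$ and an arbitrary neighbor $k \in \n(i)$ with $k \neq p$, and to show directly that $e_{pk} \in \e$, i.e.\ that $k \in \n(p)$. Since trivially $p \in \n(p) \cup \{p\}$, establishing $k \in \n(p)$ for every such $k$ yields $\n(i) \subset \n(p) \cup \{p\}$. The whole argument rests on the single structural fact supplied by the definition of a $p$-max partition: because $(V_T,V_U)$ arises from a weight function $w$ satisfying (\ref{DT condition}) with $w(p) = \max\{w(i)\st i\in V_T\}$, the vertex $p$ dominates every vertex of $V_T$ in weight, so $w(p) \geq w(i)$.

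First I would record the relevant interval memberships. Since $V_T$ is the set $\C_0 = \{\,i \st w(i)\in[\tfrac{\tp-\tm}{2},\tfrac{\tp+\tm}{2})\,\}$, every vertex of $V_T$—in particular $p$ and $i$—has weight strictly below $\tfrac{\tp+\tm}{2}$, while (using Proposition~\ref{no loss 1}) every vertex of $V_U$ has weight at least $\tfrac{\tp+\tm}{2}$. Next, from $e_{ik}\in\e$ and (\ref{DT condition}) I have $w(i)+w(k)\geq\tp$ and $|w(i)-w(k)|\leq\tm$. The sum condition for the pair $(p,k)$ is then immediate from $w(p)\geq w(i)$, namely $w(p)+w(k)\geq w(i)+w(k)\geq\tp$.

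For the difference condition I would split on whether $k\in V_T$ or $k\in V_U$. If $k\in V_T$, then $w(p)$ and $w(k)$ both lie in the half-open interval $[\tfrac{\tp-\tm}{2},\tfrac{\tp+\tm}{2})$ of width $\tm$, so $|w(p)-w(k)|<\tm$. If $k\in V_U$, then $w(k)\geq\tfrac{\tp+\tm}{2}>w(p)\geq w(i)$, so $|w(p)-w(k)| = w(k)-w(p) \leq w(k)-w(i) = |w(k)-w(i)| \leq \tm$. In both cases $|w(p)-w(k)|\leq\tm$, which together with the sum bound yields $e_{pk}\in\e$ via (\ref{DT condition}).

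The argument carries no genuine obstacle: it is a monotonicity (domination) argument exploiting that $p$ carries the largest weight in $V_T$, so every DT constraint satisfied by $i$ is a fortiori satisfied by $p$. The only point requiring slight care is the difference condition, where one must verify $w(k)>w(p)$ in the case $k\in V_U$ so that $|w(p)-w(k)|$ is correctly bounded above by $w(k)-w(i)$; this is guaranteed precisely by the interval characterization separating the weights of $V_T$ and $V_U$.
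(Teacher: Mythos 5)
Your proof is correct and takes essentially the same route as the paper's: both derive the sum condition from $w(p)\geq w(i)$ and then split on $k\in V_T$ versus $k\in V_U$ to verify $|w(p)-w(k)|\leq\tm$. The only difference is that you explicitly justify, via the interval characterization of $\C_0$, the two facts the paper asserts without comment (that all of $V_T$ lies in a weight window of width $\tm$, and that $w(k)\geq w(p)$ when $k\in V_U$).
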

\begin{proof}
  This follows immediately from the facts that the vicinal preorder is
  total on $V_T$ and that $v$ is maximal among the vertices of $V_T$ in the
  vicinal preorder.
\end{proof}

\begin{proof}[Proof of Theorem~\ref{Th:p-admissible}]
  First, observe that if $G$ is a semi-unit-interval graph, then it is
  unit interval if and only if it has no induced copy of $K_{1,3}$.
  Equivalently, under the hypothesis that $G$ is semi-unit-interval,
  $G$ is unit interval if and only if the partition $(\emptyset, V)$
  is an admissible partition. Thus, for the remainder of the proof, we may
  assume for the forward direction that $G$ is not a unit interval graph,
  and for the reverse direction that $V_T \neq \emptyset$.
  
  Let $G$ be a paired threshold graph that is not a unit
  interval graph; we show that it has an admissible partition. Let
  $w$ be a weight function 
  $(\C_0, \C_1, \ldots, \C_m)$ be a distance decomposition with the
  properties guaranteed by Theorem~\ref{maint}. Let $V_T = \C_0$ and
  let $V_U = C_1 \cup \cdots \cup \C_m$. We claim that $(V_T, V_U)$ is
  an admissible partition. Conditions (1) and (2) follow immediately
  from the properties guaranteed by Theorem~\ref{maint}.  To
  prove that Condition~(3) holds, recall the definition of the preorder $\prel$ on $\C_l$ given in
  (\ref{hoof}):
  \begin{equation*}
    i \,\prel\, j ~\Leftrightarrow~
    \begin{cases}
      \n(j)\cap \C_{l-1}  \subseteq \n(i) \cap \C_{l-1},\\
      and\\
      \n(i)\cap \C_{l+1}  \subseteq \n(j) \cap \C_{l+1},
    \end{cases}
  \end{equation*}
  We now show none of the forbidden induced colorings appear in $G$:
  \begin{itemize}
  \item Suppose that $X$ is the vertex set of a forbidden bull, and
    let $i$ and $j$ be the vertices of degree $1$.  If $i$ or $j$ is
    adjacent to some vertex of $V_T$, then the set of neighbors of
    $V_T$ does not form a clique, which is a contradiction to Theorem~\ref{maint}.  
    If neither $i$ nor $j$ is adjacent to a vertex of
    $V_T$, then both $i$ and $j$ have distance exactly $2$ from $V_T$;
    by Theorem~\ref{maint} this implies that $i$ and $j$ should be
    adjacent, which is not the case.

  \item Since $G[V_U]$ is a unit interval graph and $K_{1,3}$ is a
    forbidden induced subgraph for unit interval graphs, there cannot
    be any $K_{1,3}$ for which all vertices lie in $V_U$.

  \item Suppose that $X$ is the vertex set of a $K_{1,3}$ with all
    vertices in $V_U$ except for a single leaf vertex $i \in V_T$.
    Let $j$ be the center vertex of the $K_{1,3}$. As $j$ has a neighbor
    in $V_T = \C_0$, we have $j \in \C_1$. Letting $k_1$ and $k_2$ be
    the other leaves of the $K_{1,3}$, we see that since each of $k_1$
    and $k_2$ is a vertex of $V_U$ adjacent to the vertex $j$ in $\C_1$,
    we must have $\{k_1, k_2\} \subset \C_1 \cup \C_2$. Furthermore,
    since each $\C_i$ is a clique, the vertices $k_1$ and $k_2$ cannot
    both lie in $\C_1$, nor can they both lie in $\C_2$. Thus, we may
    assume that $k_1 \in \C_1$ and $k_2 \in \C_2$. Now since $i \in \n(j) \cap \C_0$
    but $i \notin \n(k_1) \cap \C_0$, we have $\n(j) \cap \C_0 \not\subset \n(k_1) \cap \C_0$,
    and since $k_2 \in \n(j) \cap \C_2$ but $k_2 \notin \n(k_1) \cap \C_2$, we have
    $\n(j) \cap \C_2 \not\subset \n(k_1) \cap \C_2$. This implies that neither
    $j \,\pretty_1\, k_1$ nor $k_1 \,\pretty_1\, j$ hold, which contradicts
    the property that the preorder $\pretty_1$ is total on $\C_1$.
\end{itemize}
Now, let $(V_T , V_U)$ be an admissible partition of $V$ with
$V_T \neq \emptyset$, let $\C_0 = V_T$, and let
$(\C_0, \C_1, \ldots, \C_m)$ be the resulting distance
decomposition. We will verify that the distance decomposition
satisfies Conditions (i)-(iii) of Theorem~\ref{maint}, which implies
that $G$ is a PT graph.

  Condition~(i) of Theorem~\ref{maint} follows immediately
  from Condition~(1) of the definition of an admissible partition,
  since $\C_0 = V_T$. 

  Let $\C_0  = V_T$ and, for $l \geq 1$, define $\C_l$ as in~(\ref{eq:cliques}), i.e.,
  \begin{equation}
    \C_l = \left\{ i \in \V \,{\Big{|}}\,  \dist(i,j)_{j \in \C_{0}} = l\right\} \nonumber.
  \end{equation}
  Next we establish Condition~(ii) of Theorem~\ref{maint}, which
  states that each set $\C_l$ for $l > 0$ is a clique. First we
  argue that $\C_1$ is a clique. Let $i$ be a vertex of $\C_0$ which
  is maximal in the vicinal preorder. Every vertex of $\C_1$ is adjacent
  to a vertex of $\C_0$ and thus, by the maximality of $i$, every vertex
  of $\C_1$ is adjacent to $i$. Thus, $\C_1 \subset \n(i) \cap V_U$. Now, applying Condition~(2) to $v$, we
  see that $\n(i) \cap V_U$ is a clique, hence $\C_1$ is a clique.
  
Assuming that $\C_l$ is a clique, we now show that $\C_{l+1}$ is
  also a clique. Let $i,j \in \C_{l+1}$ and suppose that $i$ and $j$
are nonadjacent.  Each of the vertices $i$ and $j$ have at least one
neighbor in $\C_l$.

  {\bf{Case 1}}: The vertices $i$ and $j$ have a common neighbor
  $k \in \C_l$. The vertex $k$ has a neighbor $q \in \C_{l-1}$; now,
  ${i,j,k,q}$ is an induced $K_{1,3}$ subgraph with the center
  $k \in V_U$ and at least two leaves $i$, $j$ in $V_U$; but this
  configuration is forbidden.

  {\bf{Case 2}}: The vertices $i$ and $j$ have no common neighbor in
  $\C_l$.  Let $i' \in \n(i) \cap \C_l$ and let
  $j' \in \n(j) \cup \C_l$. Since $\C_l$ is a clique,
  $e_{i'j'} \in E$.  If $l = 1$, then let $v$ be a vertex of $\C_0$
  which is maximal in the vicinal preorder; we have that
  $v \in \n (i') \cap \n (j')$, since each vertex of $\C_1$ is
  adjacent to a vertex of $\C_0$, and the vicinal preorder is total
  on $\C_0$. Now ${v, i, j, i', j'}$ induces a forbidden coloring
  of vertices of a bull.  If $l > 1$, let
  $k \in \n (i') \cap \C_{l-1}$. If $e_{kj'} \notin E$, then
  ${i', j', i, k}$ is an induced $K_{1,3}$ with $i'$ as its center and
  all its vertices in $V_U$, which is forbidden.  If $e_{kj'} \in E$,
  then since $l - 1 \geq 1$, we see that $k$ has some neighbor
  $q \in \C_{l-2}$.  Since $q$ cannot be adjacent to any of
  ${i', j', i, j}$, we see that ${i, j, i', j', k, q}$ induce a net.
  This contradicts the assumption that $G$ is semi-unit-interval.

  Finally, we verify Condition~(iii) of Theorem~\ref{maint}. We must show that $\prel$ is a total preorder on each
  $l$.  Let $i, j \in \C_l$ and suppose to the contrary that
  $i, j$ are incomparable in $\prel$. There are four possibilities (in
  fact, only two possibilities, up to symmetry), each of which may be eliminated as follows.

  {\bf{Case 1:}} One has $\n(i) \cap \C_{l-1} \not\subseteq \n (j) \cap \C_{l-1}$
  and $\n (j) \cap \C_{l-1} \not\subseteq \n (i) \cap \C_{l-1}$.  In this case,
  there exist $i', j' \in \C_{l-1} $ with
  $i' \in \n (i) \backslash \n (j)$ and
  $j' \in \n (j) \backslash \n (i)$.  If $l > 1$, then since
  $\C_{l-1}$ and $\C_{l}$ are cliques, this implies that $ii'j'j$
  induces a $C_4$ in $G$, contradicting the assumption that $G$ is
  chordal.  If $l = 1$, then this implies $i'$ and $j'$ are vertices
  of $V_T$ that are incomparable in the vicinal preorder,
  contradicting the assumption that $(V_T, V_U)$ is admissible.

  {\bf{Case 2}}: One has $\n (i) \cap \C_{l+1} \not\subseteq \n(j) \cap \C_{l+1}$
  and $\n (j) \cap \C_{l+1} \not\subseteq \n (i) \cap \C_{l+1}$.  By symmetry, this
  is covered by Case 1.

  {\bf{Case 3}}: One has $\n (i) \cap \C_{l-1} \not\subseteq \n (j) \cap \C_{l-1}$
  and $\n (i) \cap \C_{l+1} \not\subseteq \n (j) \cap \C_{l+1}$. Take
  $i_1 \in (\n (i) \cap \C_{l+1} ) \backslash \n (j)$ and
  $i_{2} \in (\n (i) \cap \C_{l-1} ) \backslash \n (j)$.  Observe that
  $e_{i_1 i_{2}} \notin E$, since if this edge were present, then $i_1$ would
    have distance at most $l$ to some vertex of $V_T$, contradicting $i_1 \in \C_{l+1}$
  every vertex of $V_T$. Hence ${i, j, i_{2} , i_1}$ induce a $K_{1,3}$ subgraph 
  in $G$, with only the vertex $i_{2}$ possibly belonging to $V_T$;
  this is a forbidden induced partition.

  {\bf{Case 4}}: One has $\n (j) \cap \C_{l-1} \not\subseteq \n (i) \cap \C_{l-1}$
  and $\n (j) \cap \C_{l+1} \not\subseteq \n (i) \cap \C_{l+1}$.  By symmetry, this
  is covered by Case 3.
\end{proof}
\begin{corollary}\label{cor:w}
  If $(V_T, V_U)$ is a $v$-admissible partition and $i$ is a vertex with $\n(i) \not\subseteq \n(v) \cup \{v\}$,
  then $i \in V_U$.
\end{corollary}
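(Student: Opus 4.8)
The plan is to prove this simply as the contrapositive of Lemma~\ref{lem5}. Lemma~\ref{lem5} asserts that for a $p$-max partition $(V_T,V_U)$, every vertex $i \in V_T$ satisfies $\n(i) \subset \n(p) \cup \{p\}$. The corollary hypothesizes a vertex $i$ for which this inclusion \emph{fails}, and concludes $i \in V_U$, so the two statements are logically equivalent once we exploit that $(V_T,V_U)$ is a genuine partition of $\V$.

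Concretely, I would argue by contradiction. Suppose $i \notin V_U$. Because $(V_T,V_U)$ is a partition of $\V$ and $i \in \V$, every vertex lies in exactly one of the two parts, so $i \notin V_U$ forces $i \in V_T$. Applying Lemma~\ref{lem5} to this $i \in V_T$ then yields $\n(i) \subset \n(p) \cup \{p\}$, directly contradicting the standing hypothesis that $\n(i) \not\subset \n(p) \cup \{p\}$. Hence the assumption $i \notin V_U$ is untenable, and we conclude $i \in V_U$.

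There is essentially no obstacle here: the only point requiring any care is the appeal to the partition property, ensuring that failing to lie in $V_U$ really does place $i$ in $V_T$ (rather than, say, leaving $i$ unclassified), so that Lemma~\ref{lem5} becomes applicable. Once that is noted, the result is immediate. Because the argument is a one-step contrapositive, I would keep the written proof to a single short sentence rather than belabor it.
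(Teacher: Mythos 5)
Your proof is correct and matches the paper's (implicit) reasoning exactly: the corollary is stated in the paper as an immediate consequence of Lemma~\ref{lem5}, obtained by taking the contrapositive together with the fact that $(V_T,V_U)$ partitions $\V$, which is precisely your argument.
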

Let $W = \{i \in V(G) \st \n(i) \subseteq \n(v) \cup \{v\}\}$. By Corollary~\ref{cor:w}, we have
$V_T \subseteq W$ for any $v$-admissible partition $(V_T, V_U)$.
                                                                                                            
\begin{lemma}
\label{lem7_Gr}
Let $v \in V$  and let $(V_T, V_U)$ be a partition of $V$ such that
\begin{enumerate}
 \item One has $v \in V_T$;
\item  All vertices  of $V \backslash W$ are in $V_U$; and
\item  All vertices in $V_U$ that are adjacent to $v$ form a clique.
\end{enumerate}
If $(V_T, V_U)$ has one of the forbidden induced colorings in Figure~\ref{fig:bull-k13}, then either $G$ has a forbidden bull in which $v \in V_T$,  or $G$ has a forbidden induced  $K_{1,3}$  in which some vertex of $V \backslash W$ is the center vertex.
\end{lemma}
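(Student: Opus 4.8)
The plan is to walk through the four defining conditions of $p$-admissibility (Definition~\ref{def3_Gr}) and show that, under the three hypotheses, the only conditions whose failure is possible already force one of the two advertised certificates.

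First I would extract the structural consequence of hypothesis~(2): since $V\backslash W\subseteq V_U$ we have $V_T\subseteq W$, so every $v\in V_T$ satisfies $\n(v)\subseteq\n(p)\cup\{p\}$, equivalently $v\pretty p$ in the vicinal preorder. This yields condition~(2) of Definition~\ref{def3_Gr} immediately (indeed $p$ is a maximum, not merely a maximal element, of $V_T$), and together with hypothesis~(3) it yields condition~(3): for $i\in V_T$, any two vertices of $\n(i)\cap V_U$ lie in $\n(p)\cap V_U$ and are therefore adjacent. Hence a failure of $p$-admissibility can only come from condition~(1) (an incomparable pair inside $V_T$) or condition~(4) (a forbidden bull or $K_{1,3}$).

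Next I would dispose of condition~(4) by relocating the forbidden subgraph onto $p$ or into $V\backslash W$. For a forbidden $K_{1,3}$ with centre $c$: if $c\in W$, then the two leaves that lie in $V_U$ are both neighbours of $p$ and hence adjacent by hypothesis~(3), contradicting the independence of the leaves; so $c\in V\backslash W$, which is exactly the required certificate. For a forbidden bull, whose sole $V_T$-vertex is its apex $t$: because $t\in W$, the two triangle-neighbours of $t$ are neighbours of $p$, and hypothesis~(3) forces $p$ to be non-adjacent to each horn (a horn adjacent to $p$ would join the clique $\n(p)\cap V_U$ and thereby become adjacent to the opposite triangle-neighbour, which it is not). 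Replacing $t$ by $p$ then produces a forbidden bull with apex $p\in V_T$.

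The hard part is condition~(1). From an incomparable pair $i,j\in V_T$ I would take witnesses $a\in\n(i)\backslash\n(j)$ and $b\in\n(j)\backslash\n(i)$, both forced into $\n(p)\cup\{p\}$ by $i,j\in W$, and analyse cases according to whether $a$ or $b$ equals $p$, lies in $W$, and whether $a\sim b$. Chordality rules out the configurations that would close an induced $C_4$ on $\{i,a,b,j\}$, while the semi unit interval hypothesis excludes nets and suns, and the goal in each surviving case is to exhibit either a vertex of $V\backslash W$ centring an induced $K_{1,3}$ with two $V_U$-leaves, or an induced bull whose apex can be rerouted through $p$ as above. This case is delicate because—as the unit interval bull on $\{p,a,b,i,j\}$ already shows—the four vertices $i,j,a,b$ need not by themselves host any forbidden configuration, so the argument must reach into the ambient graph to locate the extra vertex that completes the certificate; controlling these adjacencies tightly enough to land on exactly one of the two prescribed obstructions, rather than some other induced subgraph, is the main obstacle, and it is precisely here that chordality and the absence of nets and suns are indispensable.
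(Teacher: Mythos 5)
Your handling of conditions (2)--(4) of Definition~\ref{def3_Gr} is correct and coincides with the paper's own proof: the paper likewise relocates a forbidden bull onto $p$ by noting that the two triangle neighbours of the apex must lie in $\n(p)$ while the horns cannot (otherwise the clique hypothesis on $\n(p)\cap V_U$ would force a missing edge), and likewise argues that a forbidden $K_{1,3}$ has at least two leaves in $V_U$, of which at most one can lie in $\n(p)$, so the centre has a neighbour outside $\n(p)\cup\{p\}$ and therefore lies in $V\setminus W$ by Corollary~\ref{cor:w}.

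The case you flag as ``the hard part''---a violation of condition (1), i.e.\ an incomparable pair inside $V_T$---is stuck for a good reason: \emph{the lemma is false as literally stated in that case}, so no case analysis will close it. Take $V=\{p,i,j,a,b\}$ with edge set $\{pi,pj,pa,pb,ab,ia,jb\}$ and set $V_T=\{p,i,j\}$, $V_U=\{a,b\}$. This graph is chordal with no induced net or sun (it has only five vertices and every $4$-cycle has a chord), one checks $W=V$ so hypothesis (2) is vacuous, and $\n(p)\cap V_U=\{a,b\}$ is a clique, so all three hypotheses hold. Yet $\n(i)=\{p,a\}$ and $\n(j)=\{p,b\}$ are incomparable in the vicinal preorder, so the partition is not $p$-admissible, while $G$ contains no induced bull and no induced $K_{1,3}$ whatsoever (no vertex has three pairwise nonadjacent neighbours, and the degree sequence rules out a five-vertex bull). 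The paper's own proof silently skips condition (1): it only treats the situation in which a forbidden configuration of type (4) is already present and relocates it, and this suffices for the only place the lemma is used (the 2SAT theorem), because there condition (1) is guaranteed separately by the clauses added in step (iv). So rather than trying to complete your final case, you should read (or restate) the lemma as asserting only that a violation of condition (4) of Definition~\ref{def3_Gr} yields one of the two certificates---which is exactly the portion you have already proved.
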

\begin{proof}  
First suppose that $S$ is the vertex set of an induced forbidden bull, with vertices labeled as shown in Figure~\ref{bull2}. 
	\begin{figure}[t]
		\centering
		\includegraphics[width=1.2in]{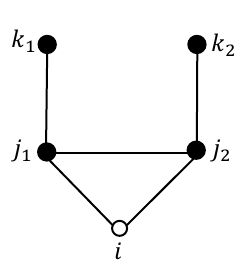}
		\caption{A partitioning of the vertices of a bull that is used in the proof of Lemma~\ref{lem7_Gr}, where $i \in V_T$ and $j_1, j_2, k_1 ,k_2 \in V_U$. }
	\label{bull2}
	\end{figure}
Since $i \in V_T$, we have $i \in W$. Therefore, $\{j_1, j_2 \} \subseteq \n(v)$.  Since $e_{j_1 k_2}, e_{j_2 k_1}  \notin E$ and  since the vertices in $V_U$ that are incident to $v$ form a clique, we see that $k_1, k_2  \notin  \n (v)$. Therefore, $(S \backslash \{i\}) \cup \{v\}$ also induces a forbidden  bull.

Now suppose that $S$ is the vertex  set of a forbidden induced coloring of $K_{1,3}$. Since the  vertices in $V_U$ that are in $\n(v)$ form a clique, at most one leaf vertex of $S$ lies in $\n(v) \cap V_U$.  In particular, the  center  vertex  of $S$ has a neighbor outside $\n (v) \cup \{v\}$, which implies that the center vertex does not lie in $W$, by Corollary~\ref{cor:w}.
\end{proof}
We are now in a position to define the 2SAT instance modeling the $v$-admissible partition problem.
\begin{definition}
 Given a semi-unit-interval  graph  $G$ and a vertex $v \in V$, we define a 2SAT instance as follows.
 \begin{enumerate}[(i)]
 \item For  each $i \in V$,  we define a variable  $x_i$, with the intended interpretation that $x_i$ is true if and only if $i \in V_T$ in the partition;
 \item We add a clause $(x_v  \vee x_v )$,  and  for each  $i \in V \backslash W$, we add a clause $(\neg x_i  \vee \neg x_i)$;
 \item For each nonadjacent pair of vertices  $i, j \in \n(v)$, we add a clause $(x_i  \vee x_j)$;
 \item  For  each pair  of vertices  $i, j$ that are incomparable in the vicinal  preorder, we add a clause $( \neg x_i  \vee \neg x_j )$;
 \item For every pair of vertices  $i, j$ that are the leaves of some induced  bull with $v$ as the degree-2 vertex, we add a clause $(x_i  \vee x_j )$;
 \item For every copy of $K_{1,3}$ with the center vertex $k \in V  \backslash W$ with leaves $i, j, q$,
we add three  clauses $(x_i \vee x_j )$, $(x_i \vee x_q )$, $(x_j  \vee x_q )$.
\end{enumerate}
\end{definition}
\begin{theorem} 
For  any semi-unit-interval graph $G$ and any $v \in V$,  $G$ has a $v$-admissible partition if and only if the associated  2SAT instance  is satisfiable.
\end{theorem}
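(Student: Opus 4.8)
The plan is to reduce the statement to the characterization of $p$-max partitions proved in the preceding theorem: since $(V_T,V_U)$ is $p$-max if and only if it is $p$-admissible, it suffices to show that $G$ admits a $p$-admissible partition precisely when the 2SAT instance has a satisfying assignment. I would exhibit the obvious correspondence between partitions and assignments, setting $x_i$ true exactly when $i\in V_T$, and verify that this correspondence carries $p$-admissible partitions to satisfying assignments and back. Throughout I would use $W=\{i:\n(i)\subset\n(p)\cup\{p\}\}$ together with Corollary~\ref{cor:w}, which forces every vertex of $V\setminus W$ into $V_U$.

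For the forward direction (admissible $\Rightarrow$ satisfiable) I would check each clause family against Definition~\ref{def3_Gr}. Clause (ii) holds because $p\in V_T$ and, by Corollary~\ref{cor:w}, $V\setminus W\subseteq V_U$; clause (iv) is exactly condition~(1); and clause (iii) is condition~(3) specialized to $p$, since $\n(p)\cap V_U$ being a clique prevents two nonadjacent neighbors of $p$ from both lying in $V_U$. The genuine work is clauses (v) and (vi). For (v) I would assume both degree-$1$ leaves of an induced bull with $p$ as its degree-$2$ vertex lie in $V_U$ and then pin down the two degree-$3$ vertices $a,b$: placing exactly one of them in $V_T$ violates condition~(3) at that vertex (its two $V_U$-neighbours among the other degree-$3$ vertex and the incident horn are nonadjacent in the induced bull), while placing both in $V_T$ violates maximality of $p$ (condition~(2) would force the horn into $\n(p)$, impossible in an induced bull); hence $a,b\in V_U$ and we obtain the forbidden bull partition, contradicting condition~(4). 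For (vi) I would note the center lies in $V_U$ by Corollary~\ref{cor:w}, so two leaves in $V_U$ yield either a $K_{1,3}$ entirely inside the unit interval graph $G[V_U]$ or a forbidden $K_{1,3}$ with a single leaf in $V_T$.

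For the reverse direction (satisfiable $\Rightarrow$ admissible) I would take a satisfying assignment, read off $(V_T,V_U)$, and observe that clauses (ii), (iii), (iv) give both the hypotheses (1)--(3) of Lemma~\ref{lem7_Gr} and conditions (1)--(3) of $p$-admissibility; in particular condition~(2) follows from $V_T\subseteq W$ (forced by clause (ii)) and the definition of $W$, and condition~(3) for a general $i\in V_T$ follows from $\n(i)\cap V_U\subseteq\n(p)\cap V_U$. To obtain condition~(4) I would apply Lemma~\ref{lem7_Gr} contrapositively: if the partition were not $p$-admissible, it would produce either a forbidden bull with $p$ as its $V_T$ degree-$2$ vertex or a forbidden $K_{1,3}$ with center in $V\setminus W$, and these are exactly ruled out by clauses (v) and (vi) respectively (the bull's two leaves cannot both be in $V_U$, and the $K_{1,3}$'s three leaves cannot include two in $V_U$).

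The main obstacle I anticipate is the forward verification of clause (v): it is the one step that cannot be read off a single admissibility condition but instead requires ruling out every placement of the bull's two degree-$3$ vertices before the forbidden partition is exposed, combining conditions (1)--(3) and crucially the induced-subgraph hypothesis. I would treat this as the crux and present the clause-(vi) case as its simpler analogue, resting directly on the fact that $G[V_U]$ is a unit interval graph and hence $K_{1,3}$-free.
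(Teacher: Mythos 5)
Your proposal is correct and follows essentially the same route as the paper: translate partitions into truth assignments, check each clause family against the conditions of Definition~\ref{def3_Gr} for the forward direction, and for the converse verify $p$-admissibility directly from clauses (ii)--(iv) and invoke Lemma~\ref{lem7_Gr} to localize any violation of condition (4) to a clause of type (v) or (vi). Your forward-direction treatment of clause (v) --- ruling out each placement of the bull's two degree-$3$ vertices before exposing the forbidden partition --- is in fact more detailed than the paper's, which dismisses clauses (v) and (vi) in a single sentence by appealing to the absence of the forbidden configurations.
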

\begin{proof}
  First, suppose that $G$ has a $v$-admissible partition $(V_T, V_U )$. Consider the 2SAT assignment obtained by letting
  $x_i$ be true if and only if $i \in V_T$. We verify that all clauses of the 2SAT instance are satisfied:

\begin{itemize}
\item By Corollary~\ref{cor:w}, all clauses added in step (ii) are satisfied.
\item Since in a $v$-admissible partition, the vertices in $V_U$ that are adjacent to $v$ form a clique, all
clauses added in step (iii) are satisfied.
\item Since in a $v$-admissible partition the vicinal preorder is total  on $V_T$, all clauses added in step (iv) are satisfied.
\item Since a $v$-admissible partition omits the forbidden induced subgraphs of Definition~\ref{def3_Gr}, all clauses added in steps (v) and (vi) are satisfied.
\end{itemize}

On the other hand, suppose that the 2SAT instance is
  satisfiable. Let $(V_T, V_U )$ be the partition obtained by putting
  $i \in V_T$ if and only if $x_i$ is true; we will prove that
  $(V_T, V_U)$ is a $v$-admissible partition.  First, observe that the
  clauses added in step~(ii) guarantee that $v \in V_T$ and that only
  vertices of $W$ can be in $V_T$, so $v$ is maximal among the
  vertices of $T$ in the vicinal preorder. Hence, if $(V_T, V_U)$ is
  admissible, then it is $v$-admissible.

To show that $(V_T, V_U)$ is admissible, we verify the conditions of Definition~\ref{def3_Gr}. Conditions (1) and (2) of Definition~\ref{def3_Gr} are easy to verify:
\begin{enumerate}[(1)]
\item No two vertices  of $V_T$ are incomparable in the vicinal preorder, since this would violate a clause added in step (iv).

\item If for some $i \in V_T$ the set $\n (i) \cap V_U$  is not a clique, then by the maximality of $\n (v)$, we also have that $\n (v) \cap V_U$  is not a clique, which would violate a clause added in step (iii).
\end{enumerate}

To verify Condition~(3) of Definition~\ref{def3_Gr}, we first observe
that satisfying the clauses added in steps (ii) and (iii) implies that
$(V_T, V_U )$ satisfies the hypothesis of Lemma~\ref{lem7_Gr}. Hence,
if $G$ has a forbidden induced bull as described in Definition~\ref{def3_Gr}, then
by Lemma \ref{lem7_Gr}, we can find such a forbidden induced coloring
with $v$ as the vertex of degree $2$, which violates a clause added in
step~(v).  Likewise, if $G$ has an induced $K_{1,3}$ with one of the
forbidden colorings, then by Lemma~\ref{lem7_Gr}, we can find some
forbidden $K_{1,3}$ whose center lies in $V \backslash W$, violating
some clause added in step~(vi). Thus, $(V_T, V_U)$ is admissible, which
  implies, by our earlier argument, that it is $v$-admissible.
\end{proof}

To complete the proof that Algorithm~\ref{alg:DT} runs in polynomial time, 
observe that the desired 2SAT instance can be
constructed in time $O(n^5)$, and has at most $O(n^5)$ clauses. Since
a 2SAT instance with $t$ clauses can be solved in $O(t)$ time~\cite{EIS-2sat},
this implies that one can check whether a $v$-admissible partition
exists (and construct one, if so) in time $O(n^5)$. With $n$ possible choices
for the vertex $v$, one can check whether an admissible partition exists
in time $O(n^6)$, so Algorithm~\ref{alg:DT} takes time $O(n^6)$ in total.

\section{Intersection Number, Diameter and Clustering Coefficient of PT Graphs} \label{sec:social}

Several measures for assessing the quality of graph models for social, economic, and biological networks include the vertex degree distribution, excluded subgraphs and network motifs, the graph diameter, intersection number and clustering coefficient. The vertex degree distribution describes the number of vertices of each degree in the graph, and is usually assumed to follow a power law~\cite{girvan2002community}. The diameter of a graph is the length of the longest shortest path between any two vertices of a graph, and it is known to be a small constant for many known social and biological networks~\cite{jackson2008social}. The intersection number of the graph describes latent network features~\cite{dau2017latent}, while a large clustering coefficient ensures that the model correctly contains a large number of triangles known to be biological and social network motifs, as described below. 

In his comprehensive study of social network motifs, Ugander~\cite{Ugander} determined the frequency of induced subgraphs with three and four vertices in a large cohort of interaction and friendship networks. In addition to showing that $K_3$ and $K_4$ cliques are the most prominent network motifs (i.e., subgraphs that appear with significantly higher frequency than predicted by some random model), Ugander also established the existence of \emph{anti-motifs} (e.g., highly infrequent induced subgraphs or forbidden induced subgraphs). For example, cycles of length four ($C_4$) represent the least likely induced subgraphs in social networks. Using the properties of PT graphs established in the previous section, it is straightforward to determine the structure of some of their forbidden induced subgraphs. In addition to avoiding induced cycles of length exceeding $3$, PT graphs may also be easily shown to avoid the subgraphs depicted in Figure~\ref{fig:forbiddensubgraphs}. Subgraph avoidance is, in general, is most easily established by showing that PT graphs belong to a larger family of graphs with well-characterized forbidden induced subgraphs. For instance, given that PT graphs are chordal, the forbidden induced subgraphs of chordal graphs are automatically inherited by PT graphs. As another example, with regards to the subgraph \ref{fig:forbiddensubgraphs}(a), it can be easily seen that for any choice of vertices satisfying Condition (i) of Theorem~\ref{maint}, Condition (ii) of Theorem~\ref{maint} is not met and therefore, there is no a distance decomposition $(\C_0,\C_1,\ldots,\C_m)$ satisfying the conditions of Theorem~\ref{maint}. Unfortunately, it appears difficult to characterize \emph{all forbidden subgraphs} of PT graph. 

In what follows, we provide a brief analysis of (a) the diameter of PT graphs, capturing relevant connectivity properties of networks; (b) the intersection number of PT graphs, which is of relevance for latent feature modeling and inference in social networks~\cite{babis,dau2016,IntSocial2012}; and (c) the clustering coefficient, providing a normalized count of the number of triangles in the graphs.

\begin{figure}[t]
        \begin{center}
        \subfigure []{
         \centering\includegraphics[width=1.8in]{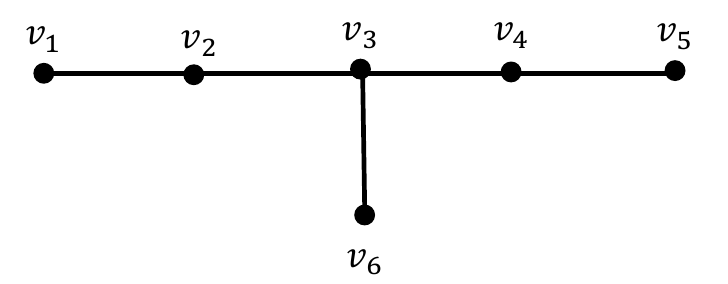}
         }	
				\subfigure []{
         \centering\includegraphics[width=1.4in]{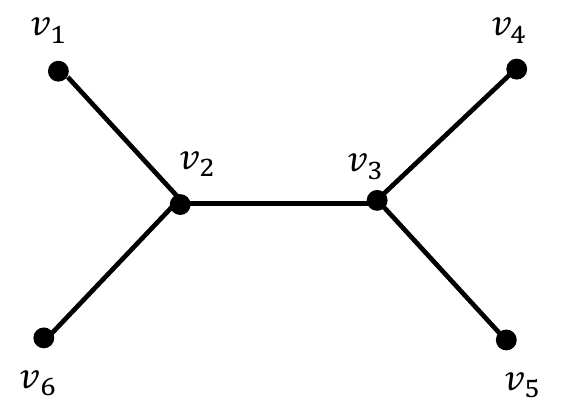}
         }
      \subfigure []{
         \centering\includegraphics[width=1.7in]{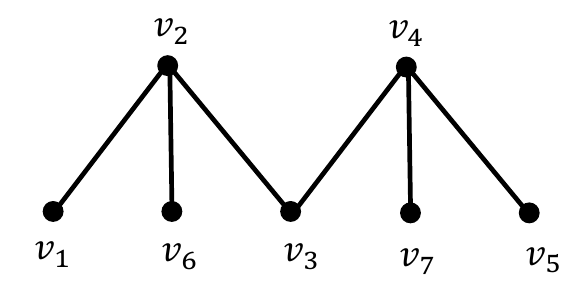}
         }
          \subfigure []{
         \centering\includegraphics[width=0.8in]{C4.pdf}
         }
        \subfigure []{
         \centering\includegraphics[width=1in]{sun.pdf}
         }  
				\subfigure []{
         \centering\includegraphics[width=0.85in]{net.pdf}
         }
        \end{center}
        \caption{Some forbidden induced subgraphs in PT graphs.}
				
				\label{fig:forbiddensubgraphs}
\end{figure}	
\subsection{The Diameter of a PT Graph}

The diameter of most social networks is a slowly growing function of the network size~\cite{broder2000graph, SmallWorld, barabasi1999diamiter}: in~\cite{bollobas2004diameter}, it was shown that preferential attachment graphs have diameters of size (sub)logarithmic in the number of vertices. The Small World phenomena~\cite{SmallWorld} suggests that the diameter of the underlying networks is close to six. In what follows, we investigate the diameter of PT graphs and determine under which conditions it matches the values observed in real social networks.

Denote the diameter of a connected PT graph by $D(G)$. Using the decomposition theorem for PT graphs, we can prove the following claim. 
\begin{theorem}
  Let $\G(\V,\e)$ be a connected PT graph with more than one vertex that is not unit interval. Let $(\C_0,\C_1,\cdots,\C_m)$ be a distance decomposition of $\G$ satisfying the conditions of Theorem \ref{maint}.
  If $m \geq 1$, then $D(G) = m+\lambda,$ where $\lambda \in \{0,1\}$.
\end{theorem}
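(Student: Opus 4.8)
The plan is to prove the two bounds $m \le D(\G) \le m+1$ separately; the lower bound is immediate and the upper bound is the real work. For the lower bound, since $m \ge 1$ the layer $\C_m$ is nonempty, so for any $u \in \C_m$ the distance decomposition gives $\min_{j \in \C_0}\dist(u,j)=m$, whence $D(\G)\ge m$.

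For the upper bound I would first build a ``spine'' through the layers. For each $l$ with $0 \le l \le m$ let $p_l$ be a maximal vertex of $\C_l$ in the total preorder $\prel$ (or $\prez$ when $l=0$); equivalently, a maximum-weight vertex of $\C_l$. The key claim is that $p_l$ is adjacent to \emph{every} vertex of $\C_{l+1}$ for $0 \le l \le m-1$. Indeed, every $x \in \C_{l+1}$ has some neighbor $y \in \C_l$ by the definition of the distance decomposition; since $y \prel p_l$ (resp. $y \prez p_l$) and $p_l$ is maximal, the inclusion $\n(y)\cap \C_{l+1}\subset \n(p_l)\cap\C_{l+1}$ (resp. $\n(y)\setminus\{p_l\}\subset\n(p_l)$) forces $x \in \n(p_l)$. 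In particular $p_l$ is adjacent to $p_{l+1}$, so $p_0 p_1 \cdots p_m$ is a path and $\dist(p_i,p_j)\le |i-j|$.

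With the spine in hand I would bound $\dist(u,v)$ for arbitrary $u \in \C_a$, $v \in \C_b$ with $a \le b$ by a case analysis. If $a \ge 1$, then $u$ lies in the clique $\C_a$ (Proposition~\ref{fact 3}, condition~(\ref{con2})), so $u$ is adjacent to $p_a$; routing $u \to p_a \to p_{a+1}\to\cdots\to p_{b-1}\to v$ uses the spine together with the adjacency of $p_{b-1}$ to all of $\C_b$, giving $\dist(u,v)\le (b-a)+1 \le m$. If $a=b=0$, totality of $\prez$ lets me assume $u \prez v$, and any neighbor $z$ of $u$ satisfies $z\in\n(u)\setminus\{v\}\subset\n(v)$, so $\dist(u,v)\le 2 \le m+1$. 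The remaining case $a=0<b$ is handled by first stepping from $u$ either directly to $p_0$ (if $u\sim p_0$) or into $\C_1$ (if not, one checks that all neighbors of $u$ then lie in $\C_1$, each adjacent to both $p_0$ and $p_1$), and then following the spine up to $p_{b-1}$ and on to $v$; this yields $\dist(u,v)\le b+1 \le m+1$. Collecting the cases gives $D(\G)\le m+1$, and together with the lower bound, $D(\G)=m+\lambda$ with $\lambda\in\{0,1\}$.

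The main obstacle is the layer $\C_0$: unlike the other layers it is only a threshold graph rather than a clique (condition~(\ref{con1})), so a vertex of $\C_0$ need not be adjacent to $p_0$ and two vertices of $\C_0$ need not be adjacent at all. The work is to show that totality of the vicinal preorder $\prez$ keeps these internal detours short — at most two extra steps when both endpoints sit in $\C_0$, and at most one extra step when only one does — so that the contribution of $\C_0$ never pushes the total beyond $m+1$. Care is also needed to route through $\C_1$ rather than through $p_0$ when a low-weight vertex of $\C_0$ has all its neighbors in $\C_1$, which is exactly the configuration that would otherwise cost an additional step.
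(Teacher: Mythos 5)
Your proof is correct and follows essentially the same route as the paper's: the lower bound $D(\G)\ge m$ from the layer structure, and the upper bound $D(\G)\le m+1$ by a case analysis on which layers the two endpoints occupy, using descending paths through the cliques $\C_1,\ldots,\C_m$ and a $\prez$-maximal vertex of $\C_0$ as a hub. Your ``spine'' of preorder-maximal vertices is just a more explicit packaging of the paper's ad hoc routing (the paper picks an arbitrary descending path and only invokes a maximal vertex of $\C_0$ when needed), so it does not constitute a genuinely different argument.
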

\begin{proof}
  Clearly $D(G) \geq m$, since a vertex in $\C_m$ and a vertex in
  $\C_0$ have distance at least $m$. Thus, it suffices to show that
  $D(G) \leq m+1$.

  First we claim that for all $i,j \in \C_0$, we have ${\rm{ dist}}(i,j) \leq
  2$. 
  Choose $i$ and $j$ so that $i\,\prez\,j$. If $i$ and $j$ lie in the same component of 
  $\G[\C_0]$, then ${\rm{dist}}(i,j)\leq 2$ because a connected subgraph of a threshold graph with more than one vertex  is also a threshold graph. Using the recursive construction for threshold graphs, one can easily verify that in connected threshold graphs the diameter is at most two. 
  If $i$ and $j$ do not lie in the same component of 
  $\G[\C_0]$, since $i\,\prez\,j$, $i$ has to be an isolated vertex in $\G[\C_0]$.
  Since $\G$ is connected and $i\,\prez\,j$, then $i$ has a neighbor $k \in \C_1$, 
  where $k \in \n(j)$, and thus  ${\rm{dist}}(i,j)\leq 2$.

  Next we claim that for all $i \in \C_0$ and $j \in \C_l$, where $1
  \leq l \leq m$, we have ${\rm{ dist}}(i,j) \leq l+1$. Let $P$ be a path with
  $l-1$ edges from $j$ to some vertex $k \in \C_1$. Such a path
  necessarily exists, since for each $r$, any vertex in $\C_r$ has a
  neighbor in $\C_{r-1}$. If $i$ has some neighbor $q \in \C_1$,
  then $Pqi$ (or $Pi$ if $q=i$) is a $j,i$-path of length at most
  $l+1$.  Otherwise, let $q$ be a $\prez$-maximal vertex of $\C_0$;
  since $\G$ is connected, we have $i \in \n(q)$, so that $Pqi$ is
  again a path of length at most $l+1$.

  Finally, we claim that if $i \in \C_l$ and $j \in \C_r$ where $r
  \geq l$, then ${\rm{ dist}}(i,j) \leq (r-l)+1$.  Let $P$ be a path with $r-l$
  edges from $j$ to a vertex $k \in \C_l$. If $k \neq i$, then $Pi$ is
  a $j,i$-path of length $r-l+1$.

  In all cases, we have ${\rm{ dist}}(i,j) \leq m+1$.
\end{proof}

Since the diameter of a PT graph with distance decomposition $(\C_0,\C_1,\cdots,\C_m)$ is at most $m+1$, the question arises whether or not a given PT graph has a decomposition with $m \leq 5$. 

To answer this question, we use the decomposition algorithm described in the previous section. We know that in a $v$-admissible partition, the vertices that possibly lie in $V_T$ are the vertices at distance at most $2$ from $v$. In particular, if $m \geq 2$ and there is a vertex at distance greater than $m$ from $v$, then that vertex has to be in $V_U$ in any $v$-admissible partition, and will therefore be in a clique at distance $m+1$ from the threshold graph. Conversely, any vertex at distance at least $m+1$ from the threshold graph is also at distance $m+1$ from $v$.

So, for $m \geq 2$, there is a partition with at most $m$ layers in the unit-interval graph if and only if there is some vertex $v$ such that (1) every vertex is within distance $m$ of $v$, and (2) the graph has a $v$-admissible partition.

For the special case $m=1$, it is no longer necessary that every vertex is within distance $1$ of $v$, but the only way this is possible is if every vertex at distance $2$ from $v$ is in $V_T$. These vertices are isolated vertices in the threshold graph. Therefore, for each such vertex, we can add ($x_i \vee x_i$) as an additional constraint to the 2SAT problem and search for a vertex $v$ such that the modified 2SAT problem has a solution. This would produce the desired decomposition.

For the special case $m=0$, one only needs to check whether the graph is a threshold graph without isolates, which is straightforward to do, and as already mentioned, such graphs have diameter at most $2$.

\subsection{ Intersection Numbers of PT Graphs} \label{Intersection number}

We start by providing relevant definitions regarding intersection graphs and intersection representations~\cite{bookIntersection}.
\begin{definition}
Let $F=\{S_1,\ldots,S_n\}$ be a family of arbitrary sets (possibly with repetition). The intersection graph associated with $F$ is an undirected graph with  vertex set $F$ and the property that $S_i$ is adjacent to $S_j$ if and only if $ i\neq j$ and $S_i \cap S_j \neq \emptyset$. 
\end{definition}
We note that every graph can be represented as an intersection graph~\cite{Intersection}. 
\begin{definition}
The intersection number of a graph $\G(\V,\e)$ is the  cardinality of a minimal set $S$ for which $G$ is the intersection graph of a family of subsets of $S$. The intersection number of $\G$ is denoted by $\iota(G)$.
\end{definition}

Equivalently, the intersection number equals the smallest number of
cliques needed to cover all of the edges of $\G$ \cite{Gross,Erdos}. A
set of cliques with this property is known as an {\emph{edge clique cover}}.  In fact, an edge clique cover of $G$ is any family
$Q=\{Q_1, \cdots, Q_k\}$ of complete subgraphs of $G$ such that every
edge of $G$ is in at least one of $E(Q_1), \cdots, E(Q_k)$,
i.e. $e_{ij} \in E(G)$ implies that $e_{ij} \in \cup_{n=1}^k E(Q_n)$
\cite{bookIntersection}.

Scheinerman and Trenk~\cite{Scheinermann} gave an algorithm to compute
the intersection number of chordal graphs in polynomial time. Since PT graphs are chordal,
it is possible to apply the Scheinerman--Trenk algorithm to compute
the intersection number of PT graphs. In this section, however, we present an
explicit formula for the intersection number of PT graphs.
\begin{theorem}\label{thm:nplus}
  Let $\G$ be a $(\tp,\tm,w)$-PT graph, and let $1, \ldots, n$ be the
  vertices of $\G$, ordered so that $w(1) \leq w(2) \leq \cdots \leq
  w(n)$.  For each $i \in \{1, \ldots, n\}$, let $\n_i^+ = \{j>i \st
  e_{ij} \in E\}$.  If \[S = \{i \in V(\G) \st \text{$\n_i^+$ is
    nonempty and $\{i\} \cup \n_i^+ \not\subseteq \n_{i-1}^+$}\},\] then $\iota(G) =
  \sizeof{S}$.
\end{theorem}
\begin{proof}
  For each $i \in S$, let $C_i = \{i\} \cup \n_i^+$. We claim that
  $\{C_i\}_{i \in S}$ is an edge clique cover of $\G$. Let $e_{jk}$ be
  any edge of $\G$, with $j < k$, and let $i$ be the largest element
  of $S$ satisfying $i \leq j$. Such an element must exist, since if
  $\min S > j$, then $d(i) = 0$ for all $i \leq j$, contradicting the
  existence of the edge $e_{jk}$. Since $k \in \n^+_j$,  the definition of $S$ implies that if $i<j$, then
  $\{j\} \cup \n_{j}^+ \subseteq \n_{j-1}^+ \subseteq \cdots \subseteq \n_{i}^+$. Thus, $e_{jk} \in E(C_i)$. If $i=j$,  then  
	$\{j\} \cup \n_{j}^+ = \{i\} \cup \n_{i}^+$ and hence, 
  $e_{jk} \in E(C_i)$. This implies that $\iota(G) \leq \sizeof{S}$.

  To show that $\iota(G) \geq \sizeof{S}$, we give a set $X$ of $\sizeof{S}$ edges
  such that any clique in $G$ contains at most one edge in $X$. For each $i \in S$,
  the definition of $S$ implies that we may fix a vertex $i^* \in \n^+_i$ such that
  $\{i,i^*\} \not\subseteq \n^+_{i-1}$. By the definition of a $(\tp,\tm,w)$-PT graph, this
  yields $\{i,i^*\} \not\subseteq \n^+_r$ for all $r < i$. Let $X = \{ e_{ii^*} \st i \in S\}$.
  Clearly $\sizeof{X} = \sizeof{S}$.

  Now suppose that $i,j$ are distinct members of $S$, with $i < j$, and let $C$
  be a clique of $\G$ containing $e_{jj^*}$. The choice of $j^*$ implies that
  $\{j,j^*\} \not\subseteq \n^+_i$. Since $i < j < j^*$, we have $\{j, j^*\} \not\subseteq \n_i$,
  so $i \notin C$, and in particular $e_{ii^*} \notin E(C)$. Thus, every clique of $\G$
  contains at most one edge of $X$, so that $\iota(G) \geq \sizeof{S}$.
\end{proof}
\subsection{The Clustering Coefficient of PT Graphs}

The global clustering coefficient of a graph is defined based on counts of triplets of vertices~\cite{Holland,Watts}. A triplet consists of a vertex (center) and two distinct vertices that are adjacent to the center. A triplet is closed if the two vertices adjacent to the center are adjacent. A triangle in the graph includes three closed triplets, one centered on each of the vertices. 

Formally, the global clustering coefficient is defined as:
\begin{eqnarray}
   C = \frac{3 \times \mbox{\# of triangles}}{\mbox{\# of triplets}}
	= \frac{\mbox{\# of closed triplets}}{\mbox{\# of triplets}}.
	\label{Clus-Coef}
\end{eqnarray}	

To calculate the clustering coefficient of a PT graph, we assume that $G(V,E)$ is a connected PT graph with $|V|=n>1$ vertices, and assume that  the order of the vertices of $G$ has been established as $1, 2, \cdots, n$, such that $w(1) \leq w(2) \leq \cdots \leq
  w(n)$. In case that two vertices are assigned the same ranking within the order, we randomly break the tie. 
  
 Let $\{d_1, \cdots, d_n\}$ be a set in which $d_i$ is the degree of the $i$-th vertex in $G$, for $i=1, \cdots, n$. 
\begin{itemize}
\item It is straightforward to see that the number of triplets in the PT graph equals $\sum_{i=1}^n {d_i \choose 2}$. 
\item 
Recall the definition of $\n_i^+$ in Theorem~\ref{thm:nplus} and let $d_i^+=|\n_i^+|$. 
Then, the number of triangles in the PT graph equals $\sum_{i=1}^n {d_i^+ \choose 2}$. To see this, first consider a vertex $i$ and the set $\n_i^+$. Let $j_1,j_2 \in \n_i^+$,where $j_1<j_2$. Then, according to Corollary~\ref{RightNeighborsClique},  $j_1$ and $j_2$ are adjacent.
\end{itemize}
\begin{lemma}
Let $\G(\V,\e)$ be a connected PT graph with $n$ vertices. Assume that the order of the vertices of $G$ has been established as $1, 2, \cdots, n$ using (\ref{eq1}) and (\ref{hoof}). Let $d_i$ and $d_i^+$ be $|\n_i|$ and $|\n_i^+|$, respectively. Then,

\begin{equation}
C = \frac{3\times \sum_{i=1}^n {d_i^+ \choose 2}}{\sum_{i=1}^n {d_i \choose 2}}.
\end{equation}

\end{lemma}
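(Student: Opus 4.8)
The plan is to verify the numerator and denominator of (\ref{Clus-Coef}) separately. The denominator is routine: a connected triple is specified by its center vertex $i$ together with an unordered pair of distinct neighbors of $i$, so the number of triples is $\sum_{i=1}^n \binom{d_i}{2}$, independently of the DT structure, and this matches the normalization in which a triangle contributes three closed triples. It therefore remains to show that the number of triangles equals $\sum_{i=1}^n \binom{d_i^+}{2}$, after which (\ref{Clus-Coef}) immediately yields the claimed expression for $C$.

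The key step is the following adjacency claim: if $i \in \V$ and $j_1, j_2 \in \n_i^+$, then $e_{j_1 j_2} \in \e$. I would prove this directly from the weights. Fix a weight function $w$ realizing $\G$ as a $(\tp,\tm,w)$-DT graph; by Theorem~\ref{maint} together with Lemma~\ref{lem1}, the ordering induced by (\ref{eq1}) and (\ref{hoof}) is consistent with increasing weight, so $j>i$ implies $w(j)\geq w(i)$. Assuming without loss of generality that $j_1 < j_2$, so that $w(i)\leq w(j_1)\leq w(j_2)$, the edge $e_{ij_1}\in\e$ gives $w(i)+w(j_1)\geq\tp$, whence $w(j_1)+w(j_2)\geq w(j_1)+w(i)\geq\tp$; and the edge $e_{ij_2}\in\e$ gives $w(j_2)-w(i)\leq\tm$, whence $0\leq w(j_2)-w(j_1)\leq w(j_2)-w(i)\leq\tm$. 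Both conditions of (\ref{DT condition}) hold, so $e_{j_1 j_2}\in\e$. Alternatively this follows from the decomposition: since $j_1,j_2$ are forward neighbors of $i\in\C_l$ they lie in $\C_l\cup\C_{l+1}$, and the same-layer cases use that $\C_l$ and $\C_{l+1}$ are cliques while the cross-layer case reduces to exactly the inequalities above.

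With the adjacency claim in hand, $\binom{d_i^+}{2}$ counts exactly the triangles whose smallest-indexed vertex is $i$: every unordered pair in $\n_i^+$ is an edge and hence completes a triangle with $i$, and conversely any triangle on vertices $a<b<c$ has its two larger vertices $b,c$ lying in $\n_a^+$, while $b$ and $c$ contribute no such pair since among $\{a,b,c\}$ only $c$ lies in $\n_b^+$ and none lies in $\n_c^+$. Thus each triangle is counted once, by its minimum vertex, and summing over $i$ gives $\sum_{i=1}^n\binom{d_i^+}{2}$ as the total number of triangles; substituting this and the triple count into (\ref{Clus-Coef}) completes the proof. The only genuine obstacle is the adjacency claim, specifically the case in which $j_1$ and $j_2$ lie in consecutive cliques $\C_l$ and $\C_{l+1}$; once the ordering is known to respect the weights, this collapses to the short monotonicity computation above, so the remainder is bookkeeping.
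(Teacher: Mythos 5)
Your proof is correct and follows essentially the same route as the paper: the denominator count is routine, and the numerator rests on the same adjacency claim that any two forward neighbors $j_1,j_2\in\n_i^+$ of $i$ are themselves adjacent, established by the same monotonicity computation $w(j_1)+w(j_2)\ge w(i)+w(j_1)\ge\tp$ and $w(j_2)-w(j_1)\le w(j_2)-w(i)\le\tm$ (the paper phrases the same-layer cases via the clique structure of $\C_l$ and $\C_{l+1}$, exactly as in your alternative remark). If anything you are slightly more careful than the paper, which leaves implicit both the consistency of the ordering from (\ref{eq1}) and (\ref{hoof}) with the weights and the verification that each triangle is counted exactly once, namely by its minimum vertex.
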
 
\textbf{Acknowledgment:} The authors gratefully acknowledge funding from the NIH BD2K Targeted Software Program, under the contract number U01 CA198943-02, and the NSF grants IOS1339388, CCF 1117980 and NSF Center for Science of Information STC Class 2009. 
The authors would also like to thank Hoang Dau, Pan Li and Hussein Tabatabei Yazdi at the University of Illinois for helpful discussions.

\end{document}